\chardef\@x10\chardef\@xv60
\def\tcitime{
\def\@time{%
  \@minute\time\@hour\@minute\divide\@hour\@xv
  \ifnum\@hour<\@x 0\fi\the\@hour:%
  \multiply\@hour\@xv\advance\@minute-\@hour
  \ifnum\@minute<\@x 0\fi\the\@minute
  }}%
\def\x@hyperref#1#2#3{%
   \catcode`\~ = 12
   \catcode`\$ = 12
   \catcode`\_ = 12
   \catcode`\# = 12
   \catcode`\& = 12
   \y@hyperref{#1}{#2}{#3}%
}
\def\y@hyperref#1#2#3#4{%
   #2\ref{#4}#3
   \catcode`\~ = 13
   \catcode`\$ = 3
   \catcode`\_ = 8
   \catcode`\# = 6
   \catcode`\& = 4
}
\def\QCTOpt[#1]#2{%
  \def\QCTOptB{#1}
  \def\QCTOptA{#2}
}
\def\QCTNOpt#1{%
  \def\QCTOptA{#1}
  \let\QCTOptB\empty
}
\def\Qct{%
  \@ifnextchar[{%
    \QCTOpt}{\QCTNOpt}
}
\def\QCBOpt[#1]#2{%
  \def\QCBOptB{#1}%
  \def\QCBOptA{#2}%
}
\def\QCBNOpt#1{%
  \def\QCBOptA{#1}%
  \let\QCBOptB\empty
}
\def\Qcb{%
  \@ifnextchar[{%
    \QCBOpt}{\QCBNOpt}%
}
\def\PrepCapArgs{%
  \ifx\QCBOptA\empty
    \ifx\QCTOptA\empty
      {}%
    \else
      \ifx\QCTOptB\empty
        {\QCTOptA}%
      \else
        [\QCTOptB]{\QCTOptA}%
      \fi
    \fi
  \else
    \ifx\QCBOptA\empty
      {}%
    \else
      \ifx\QCBOptB\empty
        {\QCBOptA}%
      \else
        [\QCBOptB]{\QCBOptA}%
      \fi
    \fi
  \fi
}
\def\GRAPHICSPS#1{%
 \ifcase\GRAPHICSTYPE
   \special{ps: #1}%
 \or
   \special{language "PS", include "#1"}%
 \fi
}%
\def\graffile#1#2#3#4{%
    \bgroup
	   \@inlabelfalse
       \leavevmode
       \@ifundefined{bbl@deactivate}{\def~{\string~}}{\activesoff}%
        \raise -#4 \BOXTHEFRAME{%
           \hbox to #2{\raise #3\hbox to #2{\null #1\hfil}}}%
    \egroup
}%
\def\draftbox#1#2#3#4{%
 \leavevmode\raise -#4 \hbox{%
  \frame{\rlap{\protect\tiny #1}\hbox to #2%
   {\vrule height#3 width\z@ depth\z@\hfil}%
  }%
 }%
}%
\let\nographics=\@msidraft
\newif\ifwasdraft
\def\GRAPHIC#1#2#3#4#5{%
   \ifnum\@msidraft=\@ne\draftbox{#2}{#3}{#4}{#5}%
   \else\graffile{#1}{#3}{#4}{#5}%
   \fi
}
\def\addtoLaTeXparams#1{%
    \edef\LaTeXparams{\LaTeXparams #1}}%
\newif\ifBoxFrame \BoxFramefalse
\newif\ifOverFrame \OverFramefalse
\newif\ifUnderFrame \UnderFramefalse
\def\BOXTHEFRAME#1{%
   \hbox{%
      \ifBoxFrame
         \frame{#1}%
      \else
         {#1}%
      \fi
   }%
}
\def\doFRAMEparams#1{\BoxFramefalse\OverFramefalse\UnderFramefalse\readFRAMEparams#1\end}%
\def\readFRAMEparams#1{%
 \ifx#1\end%
  \let\next=\relax
  \else
  \ifx#1i\dispkind=\z@\fi
  \ifx#1d\dispkind=\@ne\fi
  \ifx#1f\dispkind=\tw@\fi
  \ifx#1t\addtoLaTeXparams{t}\fi
  \ifx#1b\addtoLaTeXparams{b}\fi
  \ifx#1p\addtoLaTeXparams{p}\fi
  \ifx#1h\addtoLaTeXparams{h}\fi
  \ifx#1X\BoxFrametrue\fi
  \ifx#1O\OverFrametrue\fi
  \ifx#1U\UnderFrametrue\fi
  \ifx#1w
    \ifnum\@msidraft=1\wasdrafttrue\else\wasdraftfalse\fi
    \@msidraft=\@ne
  \fi
  \let\next=\readFRAMEparams
  \fi
 \next
 }%
\def\IFRAME#1#2#3#4#5#6{%
      \bgroup
      \let\QCTOptA\empty
      \let\QCTOptB\empty
      \let\QCBOptA\empty
      \let\QCBOptB\empty
      #6%
      \parindent=0pt
      \leftskip=0pt
      \rightskip=0pt
      \setbox0=\hbox{\QCBOptA}%
      \@tempdima=#1\relax
      \ifOverFrame
          \typeout{This is not implemented yet}%
          \show\HELP
      \else
         \ifdim\wd0>\@tempdima
            \advance\@tempdima by \@tempdima
            \ifdim\wd0 >\@tempdima
               \setbox1 =\vbox{%
                  \unskip\hbox to \@tempdima{\hfill\GRAPHIC{#5}{#4}{#1}{#2}{#3}\hfill}%
                  \unskip\hbox to \@tempdima{\parbox[b]{\@tempdima}{\QCBOptA}}%
               }%
               \wd1=\@tempdima
            \else
               \textwidth=\wd0
               \setbox1 =\vbox{%
                 \noindent\hbox to \wd0{\hfill\GRAPHIC{#5}{#4}{#1}{#2}{#3}\hfill}\\%
                 \noindent\hbox{\QCBOptA}%
               }%
               \wd1=\wd0
            \fi
         \else
            \ifdim\wd0>0pt
              \hsize=\@tempdima
              \setbox1=\vbox{%
                \unskip\GRAPHIC{#5}{#4}{#1}{#2}{0pt}%
                \break
                \unskip\hbox to \@tempdima{\hfill \QCBOptA\hfill}%
              }%
              \wd1=\@tempdima
           \else
              \hsize=\@tempdima
              \setbox1=\vbox{%
                \unskip\GRAPHIC{#5}{#4}{#1}{#2}{0pt}%
              }%
              \wd1=\@tempdima
           \fi
         \fi
         \@tempdimb=\ht1
         \advance\@tempdimb by -#2
         \advance\@tempdimb by #3
         \leavevmode
         \raise -\@tempdimb \hbox{\box1}%
      \fi
      \egroup%
}%
\def\DFRAME#1#2#3#4#5{%
  \hfil\break
  \bgroup
     \leftskip\@flushglue
	 \rightskip\@flushglue
	 \parindent\z@
	 \parfillskip\z@skip
     \let\QCTOptA\empty
     \let\QCTOptB\empty
     \let\QCBOptA\empty
     \let\QCBOptB\empty
	 \vbox\bgroup
        \ifOverFrame 
           #5\QCTOptA\par
        \fi
        \GRAPHIC{#4}{#3}{#1}{#2}{\z@}%
        \ifUnderFrame 
           \break#5\QCBOptA
        \fi
	 \egroup
   \egroup
   \break
}%
\def\FFRAME#1#2#3#4#5#6#7{%
  \@ifundefined{floatstyle}
    {
     \begin{figure}[#1]%
    }
    {
	 \ifx#1h
      \begin{figure}[H]%
	 \else
      \begin{figure}[#1]%
	 \fi
	}
  \let\QCTOptA\empty
  \let\QCTOptB\empty
  \let\QCBOptA\empty
  \let\QCBOptB\empty
  \ifOverFrame
    #4
    \ifx\QCTOptA\empty
    \else
      \ifx\QCTOptB\empty
        \caption{\QCTOptA}%
      \else
        \caption[\QCTOptB]{\QCTOptA}%
      \fi
    \fi
    \ifUnderFrame\else
      \label{#5}%
    \fi
  \else
    \UnderFrametrue%
  \fi
  \begin{center}\GRAPHIC{#7}{#6}{#2}{#3}{\z@}\end{center}%
  \ifUnderFrame
    #4
    \ifx\QCBOptA\empty
      \caption{}%
    \else
      \ifx\QCBOptB\empty
        \caption{\QCBOptA}%
      \else
        \caption[\QCBOptB]{\QCBOptA}%
      \fi
    \fi
    \label{#5}%
  \fi
  \end{figure}%
 }%
\def\makeactives{
  \catcode`\"=\active
  \catcode`\;=\active
  \catcode`\:=\active
  \catcode`\'=\active
  \catcode`\~=\active
}
   \gdef\activesoff{%
      \def"{\string"}
      \def;{\string;}
      \def:{\string:}
      \def'{\string'}
      \def~{\string~}
    }
\def\FRAME#1#2#3#4#5#6#7#8{%
 \bgroup
 \ifnum\@msidraft=\@ne
   \wasdrafttrue
 \else
   \wasdraftfalse%
 \fi
 \def\LaTeXparams{}%
 \dispkind=\z@
 \def\LaTeXparams{}%
 \doFRAMEparams{#1}%
 \ifnum\dispkind=\z@\IFRAME{#2}{#3}{#4}{#7}{#8}{#5}\else
  \ifnum\dispkind=\@ne\DFRAME{#2}{#3}{#7}{#8}{#5}\else
   \ifnum\dispkind=\tw@
    \edef\@tempa{\noexpand\FFRAME{\LaTeXparams}}%
    \@tempa{#2}{#3}{#5}{#6}{#7}{#8}%
    \fi
   \fi
  \fi
  \ifwasdraft\@msidraft=1\else\@msidraft=0\fi{}%
  \egroup
 }%
\def\TEXUX#1{"texux"}
\long\def\QQQ#1#2{%
     \long\expandafter\def\csname#1\endcsname{#2}}%
\long\def\QQA#1#2{}%
\def\QTR#1#2{{\csname#1\endcsname #2}}
\def\EXPAND#1[#2]#3{}%
\def\NOEXPAND#1[#2]#3{}%
\def\LaTeXparent#1{}%
\def\ChildStyles#1{}%
\def\ChildDefaults#1{}%
\def\QTagDef#1#2#3{}%
  \providecommand{\UNICODE}[2][]{\protect\rule{.1in}{.1in}}
  \providecommand{\U}[1]{\protect\rule{.1in}{.1in}}
\def\QQfnmark#1{\footnotemark}
 \def\abstract{%
  \if@twocolumn
   \section*{Abstract (Not appropriate in this style!)}%
   \else \small 
   \begin{center}{\bf Abstract\vspace{-.5em}\vspace{\z@}}\end{center}%
   \quotation 
   \fi
  }%
   \def\registered{\relax\ifmmode{}\r@gistered
                    \else$\m@th\r@gistered$\fi}%
 \def\r@gistered{^{\ooalign
  {\hfil\raise.07ex\hbox{$\scriptstyle\rm\text{R}$}\hfil\crcr
  \mathhexbox20D}}}}{}%
\def\TEXTsymbol#1{\mbox{$#1$}}%
\newdimen\theight
\def\newfmtname{LaTeX2e}
  \DeclareOldFontCommand{\rm}{\normalfont\rmfamily}{\mathrm}
  \DeclareOldFontCommand{\sf}{\normalfont\sffamily}{\mathsf}
  \DeclareOldFontCommand{\tt}{\normalfont\ttfamily}{\mathtt}
  \DeclareOldFontCommand{\bf}{\normalfont\bfseries}{\mathbf}
  \DeclareOldFontCommand{\it}{\normalfont\itshape}{\mathit}
  \DeclareOldFontCommand{\sl}{\normalfont\slshape}{\@nomath\sl}
  \DeclareOldFontCommand{\sc}{\normalfont\scshape}{\@nomath\sc}
\def\alpha{{\Greekmath 010B}}%
\def\beta{{\Greekmath 010C}}%
\def\gamma{{\Greekmath 010D}}%
\def\delta{{\Greekmath 010E}}%
\def\epsilon{{\Greekmath 010F}}%
\def\zeta{{\Greekmath 0110}}%
\def\eta{{\Greekmath 0111}}%
\def\theta{{\Greekmath 0112}}%
\def\iota{{\Greekmath 0113}}%
\def\kappa{{\Greekmath 0114}}%
\def\lambda{{\Greekmath 0115}}%
\def\mu{{\Greekmath 0116}}%
\def\nu{{\Greekmath 0117}}%
\def\xi{{\Greekmath 0118}}%
\def\pi{{\Greekmath 0119}}%
\def\rho{{\Greekmath 011A}}%
\def\sigma{{\Greekmath 011B}}%
\def\tau{{\Greekmath 011C}}%
\def\upsilon{{\Greekmath 011D}}%
\def\phi{{\Greekmath 011E}}%
\def\chi{{\Greekmath 011F}}%
\def\psi{{\Greekmath 0120}}%
\def\omega{{\Greekmath 0121}}%
\def\varepsilon{{\Greekmath 0122}}%
\def\vartheta{{\Greekmath 0123}}%
\def\varpi{{\Greekmath 0124}}%
\def\varrho{{\Greekmath 0125}}%
\def\varsigma{{\Greekmath 0126}}%
\def\varphi{{\Greekmath 0127}}%
\def\nabla{{\Greekmath 0272}}
\def\FindBoldGroup{%
   {\setbox0=\hbox{$\mathbf{x\global\edef\theboldgroup{\the\mathgroup}}$}}%
}
\def\Greekmath#1#2#3#4{%
    \if@compatibility
        \ifnum\mathgroup=\symbold
           \mathchoice{\mbox{\boldmath$\displaystyle\mathchar"#1#2#3#4$}}%
                      {\mbox{\boldmath$\textstyle\mathchar"#1#2#3#4$}}%
                      {\mbox{\boldmath$\scriptstyle\mathchar"#1#2#3#4$}}%
                      {\mbox{\boldmath$\scriptscriptstyle\mathchar"#1#2#3#4$}}%
        \else
           \mathchar"#1#2#3#4%
        \fi 
    \else 
        \FindBoldGroup
        \ifnum\mathgroup=\theboldgroup 
           \mathchoice{\mbox{\boldmath$\displaystyle\mathchar"#1#2#3#4$}}%
                      {\mbox{\boldmath$\textstyle\mathchar"#1#2#3#4$}}%
                      {\mbox{\boldmath$\scriptstyle\mathchar"#1#2#3#4$}}%
                      {\mbox{\boldmath$\scriptscriptstyle\mathchar"#1#2#3#4$}}%
        \else
           \mathchar"#1#2#3#4%
        \fi     	    
	  \fi}
\newif\ifGreekBold  \GreekBoldfalse
\let\SAVEPBF=\pbf
\def\pbf{\GreekBoldtrue\SAVEPBF}%
  \newcounter{equationnumber}  
  \def\mathletters{%
     \addtocounter{equation}{1}
     \edef\@currentlabel{\theequation}%
     \setcounter{equationnumber}{\c@equation}
     \setcounter{equation}{0}%
     \edef\theequation{\@currentlabel\noexpand\alph{equation}}%
  }
    \def\BibTeX{{\rm B\kern-.05em{\sc i\kern-.025em b}\kern-.08em
                 T\kern-.1667em\lower.7ex\hbox{E}\kern-.125emX}}}{}%
\def\AmS{{\protect\usefont{OMS}{cmsy}{m}{n}%
                A\kern-.1667em\lower.5ex\hbox{M}\kern-.125emS}}}{}%
\def\@@eqncr{\let\@tempa\relax
    \ifcase\@eqcnt \def\@tempa{& & &}\or \def\@tempa{& &}%
      \else \def\@tempa{&}\fi
     \@tempa
     \if@eqnsw
        \iftag@
           \@taggnum
        \else
           \@eqnnum\stepcounter{equation}%
        \fi
     \fi
     \global\tag@false
     \global\@eqnswtrue
     \global\@eqcnt\z@\cr}
\def\TCItag{\@ifnextchar*{\@TCItagstar}{\@TCItag}}
\def\@TCItag#1{%
    \global\tag@true
    \global\def\@taggnum{(#1)}}
\def\@TCItagstar*#1{%
    \global\tag@true
    \global\def\@taggnum{#1}}
\def\dsum{\mathop{\displaystyle \sum }}%
\newtheorem{defi}{Definition}
\newtheorem{fact}{Fact}
\newtheorem{theo}{Theorem}
\newtheorem{lemm}{Lemma}
\newtheorem{rema}{Remark}
\begin{document}

\title{Eigen-Direction Alignment Based Physical-Layer Network Coding for
MIMO Two-Way Relay Channels}
\author{Tao Yang, \emph{Member, IEEE}, Xiaojun Yuan, \emph{Member, IEEE}, Li
Ping, \emph{Fellow, IEEE}, Iain B. Collings, \emph{Senior Member,
IEEE} and Jinhong Yuan, \emph{Senior Member, IEEE} \thanks{Tao
Yang and Iain. B. Collings are with CSIRO ICT Centre, Australia.
The work of Xiaojun Yuan and Li Ping was fully supported by a
grant from the University Grants Committee of the Hong Kong
Special Administrative Region, China (Project No. AoE/E-02/08).
The work of Jinhong Yuan was supported by Australian Research
Council under the ARC Discovery Grant DP110104995.} } \maketitle

\begin{abstract}
In this paper, we propose a novel communication strategy which
incorporates physical-layer network coding (PNC) into
multiple-input multiple output (MIMO) two-way relay channels
(TWRCs). At the heart of the proposed scheme lies a new key
technique referred to as eigen-direction alignment (EDA)
precoding. The EDA precoding efficiently aligns the two-user's
eigen-modes into the same directions. Based on that, we carry out
multi-stream PNC over the aligned eigen-modes. We derive an
achievable rate of the proposed EDA-PNC scheme, based on nested
lattice codes, over a MIMO TWRC. Asymptotic analysis shows that
the proposed EDA-PNC scheme approaches the capacity upper bound as
the number of user antennas increases towards infinity. For a
finite number of user antennas, we formulate the design criterion
of the optimal EDA precoder and present solutions. Numerical
results show that there is only a marginal gap between the
achievable rate of the proposed EDA-PNC scheme and the capacity
upper bound of the MIMO TWRC, in the median-to-large SNR region.
We also show that the proposed EDA-PNC scheme significantly
outperforms existing amplify-and-forward and decode-and-forward
based schemes for MIMO TWRCs.
\end{abstract}

\newpage

\section{Introduction}

A two-way relay channel (TWRC), where two users exchange
information simultaneously via an intermediate relay, can
potentially double the throughput of a conventional one-way relay
channel \cite{ZhangMobicom06}. Recently, it has been shown that
physical-layer network coding (PNC) can achieve within 1/2 bit of
the capacity of a single-input single-output (SISO) Gaussian TWRC
\cite{NamIT09}, \cite{WilsonIT10}, and it is asymptotically
optimal at high signal-to-noise ratios (SNRs). In the PNC scheme,
the two users transmit signals simultaneously to the relay. The
relay recovers and forwards only compressed information of the two
users, rather than the complete information. This is in contrast
to the well-known amplify-and-forward (AF)
\cite{KattiSigcomm07}-\cite{XuICCASP2010} and decode-and-forward
(DF) based schemes \cite{GunduzAsilomar2008} for TWRCs.

The existing work on PNC is limited to SISO scenarios. It is
well-known that multiple-input multiple-output (MIMO) systems can
provide many advantages over SISO systems, in a rich-scattering
environment \cite{Foschini96}. The
challenge is to extend PNC to MIMO TWRCs. In \cite{ZhangMobicom06} and \cite%
{NamIT09}, the PNC scheme required that the two-user's signals
received by the relay are \textit{aligned}\ in the same spatial
direction. This
condition is naturally guaranteed in a SISO Gaussian TWRC \cite%
{ZhangMobicom06}, \cite{NamIT09}. However, in a MIMO environment,
each user has multiple eigen-modes. The directions of the
eigen-modes (referred to as eigen-directions) of the two users in
the TWRC are different in general. Therefore, the main challenge
is to design an efficient technique to align the eigen-directions
of the two users. This will lead to a practical PNC scheme for
MIMO TWRC. We will show that the performance can be up to 50\%
higher in spectral efficiency at practical SNR levels, compared
with the existing schemes for MIMO TWRCs that do not employ PNC.

In this paper, we propose a novel \textit{eigen-direction
alignment} (EDA) precoding based PNC scheme for MIMO TWRCs. \ The
key of the proposed EDA precoding is that it efficiently aligns
the two-user's eigen-modes into the same directions. Then, we
construct multiple independent PNC streams over the aligned
eigen-modes established by the EDA precoding. We refer to the
proposed strategy as an EDA-PNC scheme.

We derive achievable rates of the proposed EDA-PNC scheme, based
on nested lattice codes \cite{NamIT09}. Our asymptotic analysis
shows that the proposed EDA-PNC scheme approaches the capacity
upper bound of a MIMO TWRC, as the numbers of user antennas
increase towards infinity. For a finite number of user antennas,
we formulate the design criterion of the optimal EDA precoder,
which leads to a non-convex optimization problem. For a relatively
small spatial dimension, we develop an exhaustive search method to
obtain the optimal EDA precoder. For a larger spatial dimension,
we derive \textit{approximate solutions} to the optimization
problem. Numerical results show that there is only a marginal gap
between the achievable rate of the proposed EDA-PNC scheme and the
capacity upper bound of the MIMO TWRC, in the median-to-large SNR
region. We also show that the proposed EDA-PNC scheme
significantly outperforms the existing AF- and DF-based schemes
for MIMO TWRC.

The paper is organized as follows. In Section II, we depict the
model of a MIMO TWRC and a two-phase transmission protocol. In
Section III, we derive a capacity upper bound and briefly discuss
two existing schemes. In Section IV, we propose the EDA-PNC
scheme. In Section V, we derive the achievable rate of the
proposed scheme and present an asymptotical result. The design
criterion of the optimal EDA precoder is also given in Section V.
In Section VI, we discuss sub-optimal EDA precoders. Numerical
results are shown and discussed in Section VII. Finally, we draw
the conclusions in Section VIII.

\section{System Model}

In this section, we introduce the modelling of a MIMO TWRC and
describe a two-phase transmission protocol. We focus on a
real-valued model in this paper. The extension of our results to a
complex-valued model is straightforward, as detailed in Appendix
I.

\subsection{Configuration of a MIMO TWRC}

A MIMO TWRC, in which user $A$ and user $B$ exchange information
via a relay, is illustrated in Fig.
\ref{Fig_Configuration_MIMOTWRC}. Each user is equipped with
$n_{T}$ antennas and the relay has $n_{R}$ antennas. All the
channels in the system are assumed to be flat-fading within the
bandwidth of interest. The channel from user $A$ $($or $B)$ to the
relay is denoted by an
$n_{R}$-by-$n_{T}$ matrix $\mathbf{H}_{A,R}$ $\left( \text{or }\mathbf{H}%
_{B,R}\right) $. The channel from\ the relay to user $A$ $($or
$B)$ is
denoted by an $n_{T}$-by-$n_{R}$ matrix $\mathbf{H}_{R,A}$ $\left( \text{or }%
\mathbf{H}_{R,B}\right) $.

The users and the relay operate in half-duplex mode. There is no
direct link between the two users. The transmission protocol
employs two consecutive equal-duration time-slots for each round
of information exchange between the users via the relay. Each
time-slot consists of $n$ channel users. In the first time-slot
(uplink phase), the two users transmit to the relay simultaneously
and the relay remains silent. In the second time-slot (downlink
phase), the relay broadcasts to the two silent users. We assume
that the channel coefficients remain the same for each round of
information exchange. We also assume that the channel matrices are
globally known by both users, as well as by the relay.

In this paper, we will only consider the situation of $n_{T}$ $\geq $ $n_{R}$%
. This configuration applies to practical scenarios such as a
wireless sensor network where the physical sizes of the
intermediate sensor nodes are
smaller than those of the terminal nodes\footnote{%
The situation of $n_{T}<n_{R}$ will be addressed in our future
work.}.

\subsection{Uplink Phase}

The discrete channel of the uplink phase can be written as%
\begin{equation}
Y_{R}\left[ l\right] =\mathbf{H}_{A,R}X_{A}\left[ l\right] +\mathbf{H}%
_{B,R}X_{B}\left[ l\right] +Z_{R}\left[ l\right] ,\text{
}l=1,\cdots ,n, \label{Eq_SystemModel_Uplink}
\end{equation}%
where $X_{m}\left[ l\right] $ is an $n_{T}$-by-1 column vector with the $i$%
th entry $x_{m,i}\left[ l\right] ,i=1,\cdots ,n_{T}$, being the
coded signal transmitted from antenna $i$ of user $m$, $m\in
\left\{ A,B\right\} $, at time instant $l$; $Y_{R}\left[ l\right]
$ is an $n_{R}$-by-1 column vector with the $j$th entry
$y_{R,j}\left[ l\right] ,j=1,\cdots ,n_{R}$, being the signal
received from antenna $j$ of the relay; $Z_{R}\left[ l\right] $ is
an $n_{R}$-by-1 additive white Gaussian noise (AWGN) vector at the
relay with the $j$th entry $z_{R,j}\left[ l\right] \sim
\mathcal{N}\left( 0,\sigma _{R}^{2}\right) ,$ $j=1,\cdots ,n_{R}$,
where $\sigma _{R}^{2}$ is the noise variance. For notational
simplicity, the time index $l$ may be omitted in situations
without causing ambiguity.

The channel input covariances of the two users are denoted by $\mathbf{Q}%
_{m}=\mathcal{E}\left( X_{m}X_{m}^{T}\right) ,m\in \left\{
A,B\right\} $, where $\mathcal{E}\left( \cdot \right) $ stands for
the expectation operation. The power constraint of the uplink
phase is given by
\begin{equation}
\text{Tr}\left\{ \mathbf{Q}_{A}+\mathbf{Q}_{B}\right\} \leq P_{T}
\label{Eq_PowerConstraint}
\end{equation}%
where $P_{T}$ is the total transmission power of the two users.
The average
per-user SNR of the uplink phase is defined as%
\begin{equation}
SNR\triangleq \frac{P_{T}}{2\sigma _{R}^{2}}.
\label{Eq_DefinitionSNR}
\end{equation}

\subsection{Relay's Operation}

Upon receiving $\mathbf{Y}_{R}=\left[ Y_{R}\left[ 1\right] ,\cdots ,Y_{R}%
\left[ n\right] \right] $, the relay generates a signal matrix $\mathbf{X}%
_{R}=\left[ X_{R}\left[ 1\right] ,\cdots ,X_{R}\left[ n\right]
\right] $.
Here, $X_{R}\left[ l\right] $ is an $n_{R}$-by-$1$ real vector with the $j$%
th entry $x_{R,j}\left[ l\right] $, $j=1,\cdots ,n_{R}$, being the
signal transmitted from the $j$th antenna of the relay, at time
instant $l$, in the
downlink phase. In general, the relationship between $\mathbf{X}_{R}$ and $%
\mathbf{Y}_{R}$ can be written as%
\begin{equation}
\mathbf{X}_{R}=f_{R}\left( \mathbf{Y}_{R}\right)
\label{Eq_RelayFunctionality}
\end{equation}%
where $f_{R}\left( \cdot \right) $ denotes the relay's
functionality. The
relay's power constraint is given by%
\begin{equation}
\text{Tr}\left\{ \mathbf{Q}_{R}\right\} \leq P_{R}
\label{Eq_RelayPowerConstraint}
\end{equation}%
where $\mathbf{Q}_{R}=\mathcal{E}\left( X_{R}X_{R}^{T}\right) \
$denotes the channel input covariance matrix of the relay in the
downlink phase.

\begin{rema}
In this paper, the power constraints under consideration are given by $%
\left( \ref{Eq_PowerConstraint}\right) $ and $\left( \ref%
{Eq_RelayPowerConstraint}\right) $. The generalization to the case
with a global sum-power constraint can be readily done by trading
off the portion of power allocated to the users and that to the
relay.
\end{rema}

\subsection{Downlink Phase}

During the downlink phase, the signal $\mathbf{X}_{R}=\left[ X_{R}\left[ 1%
\right] ,\cdots ,X_{R}\left[ n\right] \right] $ serves as the
channel input
and is broadcast to users $A$ and $B$. The signals received by user $m$, $%
m\in \left\{ A,B\right\} ,$ are given by%
\begin{equation}
Y_{m}\left[ l\right] =\mathbf{H}_{R,m}X_{R}\left[ l\right] +Z_{m}\left[ l%
\right] ,l=1,\cdots ,n,  \label{Eq_SystemModel_Downlink}
\end{equation}%
where $Z_{m}\left[ l\right] $ is an $n_{T}$-by-1 AWGN vector with
the $i$th entry $z_{m,i}\left[ l\right] \sim \mathcal{N}\left(
0,\sigma _{m}^{2}\right) $, $i=1,\cdots ,n_{T}$, where $\sigma
_{m}^{2}$ is the noise
variance at user $m$. Upon receiving $\mathbf{Y}_{A}=\left[ Y_{A}\left[ 1%
\right] ,\cdots ,Y_{A}\left[ n\right] \right] $, user $A$ decodes
user $B$'s message with the help of the perfect knowledge of
$\mathbf{X}_{A}=\left[ X_{A}\left[ 1\right] ,\cdots ,X_{A}\left[
n\right] \right] $. Meanwhile, similar operations are performed by
user $B$. This finishes one round of information exchange.

For notational simplicity, we assume $\sigma _{R}^{2}=\sigma
_{A}^{2}=\sigma _{B}^{2}=1$ in this paper. Then, the average
per-user SNR is $SNR=P_{T}/2$ and the SNR of the relay is
$SNR_{R}=$ $P_{R}$. The extension of our results to the case of
unequal noise power is straightforward.

\section{Capacity Upper Bound and Existing Schemes for a MIMO TWRC}

\subsection{Definitions}

The achievable rate-pair and rate-region of a MIMO TWRC are
defined as follows:

\begin{defi}
\label{Definition1}A rate-pair $\left( R_{A}\text{, }R_{B}\right)
$ is said
to be achievable if there exists a set of $2^{nR_{A}}$ codewords for user $A$%
, a set of $2^{nR_{B}}$ codewords for user $B$ and a relay functionality $%
\mathbf{X}_{R}=f_{R}\left( \mathbf{Y}_{R}\right) ,$ satisfying
power constraints (\ref{Eq_PowerConstraint}) and
(\ref{Eq_RelayPowerConstraint}), such that the decoding error
probabilities approach zero at both user nodes of the TWRC, as
$n\rightarrow \infty $.
\end{defi}

\begin{rema}
The rate of each user is defined as the amount of transmitted bits
in each transmission round, normalized by the duration of
\textit{one} phase (consisting of $n$ channel uses).
\end{rema}

\begin{defi}
The achievable rate-region $\mathcal{R}$ is defined as the convex
closure of all achievable rate-pairs$.$
\end{defi}

\subsection{Capacity Upper Bound of a MIMO TWRC}

We now derive a new capacity upper bound (UB) for a MIMO TWRC. We
present the result in the following lemma, which is an extension
of the cut-set bound for a SISO TWRC \cite{NamIT09}.

\begin{lemm}
\label{LemmaAchievableRateUB}For given input covariance matrices $\mathbf{Q}%
_{A},$ $\mathbf{Q}_{B}$ and $\mathbf{Q}_{R}$, the achievable
rate-pair of a MIMO TWRC is upper bounded by
\begin{subequations}
 \label{Eq_Rate_UB_MIMO}
\begin{align}
& R_{A}\leq R_{A}^{UB}=\frac{1}{2}\min \left[ \log \det \left( \mathbf{I}+%
\mathbf{H}_{A,R}\mathbf{Q}_{A}\mathbf{H}_{A,R}^{T}\right) \text{,
}\log \det
\left( \mathbf{I}+\mathbf{H}_{R,B}\mathbf{Q}_{R}\mathbf{H}_{R,B}^{T}\right) %
\right]  \label{Eq_RateA_UB_MIMO} \\
& R_{B}\leq R_{B}^{UB}=\frac{1}{2}\min \left[ \log \det \left( \mathbf{I}+%
\mathbf{H}_{B,R}\mathbf{Q}_{B}\mathbf{H}_{B,R}^{T}\right) \text{,
}\log \det
\left( \mathbf{I}+\mathbf{H}_{R,A}\mathbf{Q}_{R}\mathbf{H}_{R,A}^{T}\right) %
\right] .  \label{Eq_RateB_UB_MIMO}
\end{align}
\end{subequations}
\end{lemm}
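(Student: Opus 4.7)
The plan is to establish this bound by a cut-set argument adapted to the half-duplex two-phase TWRC protocol. Because user $A$'s message must travel through the relay to reach user $B$, the rate $R_A$ is bounded by the minimum capacity of the two cuts separating source from destination: $\{A\}\mid\{R,B\}$ (constraining the uplink) and $\{A,R\}\mid\{B\}$ (constraining the downlink). An analogous pair of cuts gives the bound on $R_B$, and symmetry allows us to derive the two bounds in (\ref{Eq_Rate_UB_MIMO}) in parallel.

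For the cut $\{A\}\mid\{R,B\}$, I would begin from Fano's inequality at user $B$'s decoder for user $A$'s message and expand the cross-cut mutual information over the $2n$-use round. Since user $A$ is silent during the downlink phase, only the $n$ uplink uses contribute; and since user $A$'s message is independent of user $B$'s codeword, conditioning on $X_B$ is legitimate and removes the interference term in $Y_R=\mathbf{H}_{A,R}X_A+\mathbf{H}_{B,R}X_B+Z_R$, leaving the effective channel $\mathbf{H}_{A,R}X_A+Z_R$. For fixed input covariance $\mathbf{Q}_A$, the per-use mutual information is maximized by a Gaussian law and equals $\log\det(\mathbf{I}+\mathbf{H}_{A,R}\mathbf{Q}_A\mathbf{H}_{A,R}^T)$, with the standard concavity-of-$\log\det$ argument single-letterizing the $n$-letter bound. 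For the cut $\{A,R\}\mid\{B\}$, the symmetric argument isolates the $n$ downlink uses $Y_B=\mathbf{H}_{R,B}X_R+Z_B$ and yields the MIMO Gaussian capacity $\log\det(\mathbf{I}+\mathbf{H}_{R,B}\mathbf{Q}_R\mathbf{H}_{R,B}^T)$ with fixed covariance $\mathbf{Q}_R$. The prefactor $\tfrac{1}{2}$ then follows from the rate normalization inherent to the two-phase half-duplex protocol, and taking the minimum of the two bounds gives $R_A^{UB}$ in (\ref{Eq_RateA_UB_MIMO}); the bound on $R_B$ follows by interchanging the roles of $A$ and $B$.

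The step that needs the most care is the bookkeeping of the two-phase structure: specifically, showing that the joint cut-set mutual information over $2n$ uses decomposes cleanly into an uplink and a downlink contribution with the opposite-phase transmitter's silence eliminating any cross terms, and that conditioning on the other user's independently generated codeword is justified at every step of the chain-rule expansion. Once this decomposition is in place, the Gaussian maximization under the fixed second-moment constraints $\mathbf{Q}_A$, $\mathbf{Q}_B$, $\mathbf{Q}_R$ and the emergence of the $\tfrac{1}{2}$ factor are routine, and the lemma follows by combining the four single-cut inequalities.
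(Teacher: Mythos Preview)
Your approach is essentially the same as the paper's: invoke the cut-set bound to obtain $R_A \leq \min\{I(X_A;Y_R|X_B),\,I(X_R;Y_B)\}$ (and symmetrically for $R_B$), then evaluate each mutual information via the Gaussian MIMO capacity formula under the fixed covariances $\mathbf{Q}_A,\mathbf{Q}_B,\mathbf{Q}_R$. The paper's proof is in fact much terser than yours --- it simply cites the SISO cut-set bound from \cite{NamIT09} and plugs in the MIMO capacity expression --- whereas you spell out the Fano/chain-rule/single-letterization machinery. Both are fine.

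One point of bookkeeping deserves correction, however. You attribute the prefactor $\tfrac{1}{2}$ to ``the rate normalization inherent to the two-phase half-duplex protocol.'' Under this paper's conventions that is not where it comes from. Per Remark~2, rates are normalized by the duration of \emph{one} phase ($n$ channel uses), not by the full $2n$-use round; hence the half-duplex structure contributes no factor of $\tfrac{1}{2}$ here. The $\tfrac{1}{2}$ arises instead because the paper works with a \emph{real-valued} channel model (Section~II, first paragraph), for which the per-use Gaussian mutual information is $\tfrac{1}{2}\log\det(\mathbf{I}+\mathbf{H}\mathbf{Q}\mathbf{H}^T)$, not $\log\det(\cdot)$. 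Your derivation arrives at the right expression, but via a cancellation of two misreadings (normalizing by $2n$ while using the complex-valued per-use formula). Fixing the attribution is trivial and the argument is otherwise sound.
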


\begin{proof}
From the cut-set bound, the achievable rate-pair of a TWRC is
upper-bounded
by \cite{NamIT09}%
\begin{subequations}
\begin{align}
& R_{A}\leq R_{A}^{UB}=\min \left\{
I(X_{A};Y_{R}|X_{B}),I(X_{R};Y_{B})\right\} ,  \label{Eq_RateA_Simplified} \\
& R_{B}\leq R_{B}^{UB}=\min \left\{
I(X_{B};Y_{R}|X_{A}),I(X_{R};Y_{A})\right\} .
\label{Eq_RateB_Simplified}
\end{align}%
\end{subequations}
Applying the capacity formula of a real-valued MIMO channel \cite%
{TSETextBook} for given input covariances $\mathbf{Q}_{A},$
$\mathbf{Q}_{B}$ and $\mathbf{Q}_{R}$, we obtain
(\ref{Eq_Rate_UB_MIMO}).
\end{proof}

With the result in Lemma \ref{LemmaAchievableRateUB}, the capacity
UB of a
MIMO TWRC can be determined by optimizing\footnote{%
This is a convex optimization problem which can be easily solved
using a
standard tool, e.g., \cite{CVXSoftware}.} the covariance matrices $\mathbf{Q}%
_{A},$ $\mathbf{Q}_{B}$ and $\mathbf{Q}_{R}$. This capacity UB
provides an upper limit on the data rate that any MIMO two-way
relay scheme can achieve.

\subsection{Analog Network Coding for a MIMO TWRC}

Much progress has been made in developing communication strategies
to approach the capacity of a MIMO TWRC. Among those, an AF-based
scheme,
namely analog network coding (ANC) \cite{KattiSigcomm07}-\cite{XuICCASP2010}%
, has attracted a great deal of attention. In ANC, the relay
broadcasts an amplified version of its received signal to the two
users. The maximum achievable rate of ANC in MIMO TWRC remains
unsolved and a sub-optimal solution is reported in
\cite{XuICCASP2010}. In this paper, we will consider the upper
bound of the achievable rate of ANC, derived in \cite{YuanANC}, as
a benchmark for comparison purpose.

The ANC scheme has two disadvantages. First, it suffers from noise
amplification, since the noise received by the relay is not
suppressed before the signal is forwarded to the users. Second, it
suffers from unnecessary power consumption at the relay, since the
AF relay forwards a linear, rather than an algebraic,
superposition of the signals from the two users
\cite{ZhangMobicom06}.

\subsection{DF with Network Coding for a MIMO TWRC}

A DF-based scheme has also been studied for a MIMO TWRC \cite%
{GunduzAsilomar2008}, \cite{ZhangJSAC09}. In the DF-based scheme,
the relay completely decodes both users' messages. The decoded
messages of two users
are re-encoded with a network code \cite{ZhangMobicom06}, \cite%
{AhlswedeIT2000}, and a channel code. The resultant coded signal
is broadcast to the two users in the downlink phase. We refer to
this scheme as DF with network coding (DF-NC).

The achievable rate of the DF-NC scheme is briefly discussed as
follows. The uplink phase of DF-NC can be viewed as a MIMO
multiple-access channel whose exact achievable rate-region is
still an open problem, although its upper and lower bounds are
studied in \cite{YuIT04}. We will use the upper bound in
\cite{YuIT04} for comparison purpose. The downlink rate-region of
the DF-NC scheme can be obtained by extending the result in
\cite{NamIT09} to a MIMO scenario. The overall achievable
rate-region of the DF-NC scheme is the intersection of the uplink
and downlink rate-regions determined above.

The DF-NC scheme suffers from a severe multiplexing loss
\cite{WilsonIT10}, \cite{GunduzAsilomar2008}, as complete decoding
at the relay is demanding and unnecessary. As a result, the
achievable rate of the DF-NC scheme may far below the capacity of
a MIMO TWRC, especially in the high SNR region
\cite{GunduzAsilomar2008}.

\section{Eigen-Direction Alignment Based Physical-Layer Network Coding}

In this section, we propose a new strategy for MIMO TWRCs. The
proposed strategy consists of two key components:
\textit{eigen-direction alignment (EDA) precoding\ }and\textit{\
physical-layer network coding} (PNC). In particular, the proposed
EDA precoding algorithm efficiently aligns the eigen-directions of
two users. Then, we carry out multi-stream PNC over the aligned
eigen-modes established by the EDA precoding.

To illustrate the proposed EDA precoding algorithm, we first
describe a straightforward (naive) method to perform the
eigen-direction alignment.

\subsection{A Naive Eigen-Direction Alignment Approach}

Denote by $\mathbf{F}_{A}$ and $\mathbf{F}_{B}$ the linear
precoding matrices of user $A$ and user $B$, respectively. The
users' transmitted
signals can be written as%
\begin{equation}
X_{m}\left[ l\right] =\mathbf{F}_{m}C_{m}\left[ l\right] ,m\in
\left\{ A,B\right\} ,l=1,\cdots ,n,  \label{Eq Transmitted signal
vector}
\end{equation}%
where $C_{m}\left[ l\right] =\left[ c_{m,1}\left[ l\right] ,\cdots
,c_{m,n_{R}}\left[ l\right] \right] ^{T}$, $\mathcal{E}\left(
C_{m}C_{m}^{T}\right) =\mathbf{I}$, is a length-$n_{R}$ column
vector whose entries denote the independently coded signals. As a
straightforward approach, the precoder performs channel inverse,
i.e., $\mathbf{F}_{m}$ is
given by\footnote{%
In general, a randomly generated $\mathbf{H}_{A,R}$ $\left( \text{or }%
\mathbf{H}_{B,R}\right) $ with $n_{T}\geq n_{R}$ is of full
row-rank with probability 1 \cite{TSETextBook}. For simplicity of
discussion, we always
assume that $\mathbf{H}_{A,R}$ and $\mathbf{H}_{B,R}$ are of full row-rank.}%
\begin{equation}
\mathbf{F}_{m}=\mathbf{H}_{m,R}^{T}\left( \mathbf{H}_{m,R}\mathbf{H}%
_{m,R}^{T}\right) ^{-1}\mathbf{\Psi }_{m},m\in \left\{ A,B\right\}
, \label{Eq ZF}
\end{equation}%
where $\mathbf{H}_{m,R}^{T}\left( \mathbf{H}_{m,R}\mathbf{H}%
_{m,R}^{T}\right) ^{-1}$ is the Moore-Penrose pseudo-inverse of $\mathbf{H}%
_{m,R}$ (for $n_{T}\geq n_{R}$) and $\mathbf{\Psi }_{m}$ is an
$n_{R}\times n_{R}$ diagonal matrix which allocates power among
the $n_{R}$ eigen-modes
for user $m$. With (\ref{Eq ZF}), the signal received by the relay in (\ref%
{Eq_SystemModel_Uplink}) can be written as%
\begin{subequations}
\begin{align}
Y_{R}\left[ l\right] & =\mathbf{H}_{A,R}\mathbf{F}_{A}C_{A}\left[ l\right] +%
\mathbf{H}_{B,R}\mathbf{F}_{B}C_{B}\left[ l\right] +Z_{R}\left[
l\right]
\label{Eq ZF Received Signal} \\
& =\mathbf{\Psi }_{A}C_{A}\left[ l\right] +\mathbf{\Psi }_{B}C_{B}\left[ l%
\right] +Z_{R}\left[ l\right] .  \label{Eq ZF Received Signal b}
\end{align}%
\end{subequations} Eq. (\ref{Eq ZF Received Signal b}) represents
$n_{R}$ parallel sub-channels, as both $\mathbf{\Psi }_{A}$ and
$\mathbf{\Psi }_{B}$ are diagonal matrices. The above approach is
referred to as a \textit{naive EDA precoding}. Unfortunately, it
is well-known that the channel inverse in precoding suffers from a
significant power loss when the channel matrix is ill-conditioned
\cite{TSETextBook}. Thus, this approach may not be an efficient
method to align the eigen-directions.

\subsection{Proposed Eigen-Direction Alignment Precoding}

Now, we propose our new EDA precoding algorithm which can
effectively avoid the power loss suffered by the naive EDA
precoder. Consider an invertible
linear transformation of the relay's received signal as%
\begin{eqnarray}
\widetilde{Y}_{R}\left[ l\right] &=&\mathbf{K}^{-1}Y_{R}\left[
l\right]
\label{Eq Linear Transformation} \\
&=&\mathbf{K}^{-1}\mathbf{H}_{A,R}X_{A}\left[ l\right] +\mathbf{K}^{-1}%
\mathbf{H}_{B,R}X_{B}\left[ l\right] +\mathbf{K}^{-1}Z_{R}\left[ l\right] ,%
\text{ }l=1,\cdots ,n,  \notag
\end{eqnarray}%
where $\mathbf{K}$ is an $n_{R}$-by-$n_{R}$ invertible square
matrix referred to as the \textit{rotation matrix}. The equivalent
channel matrices
are now given by%
\begin{equation}
\widetilde{\mathbf{H}}_{m,R}=\mathbf{K}^{-1}\mathbf{H}_{m,R},m\in
\left\{ A,B\right\} .  \label{Eq Equivalent Channel}
\end{equation}

Applying the aforementioned naive EDA precoding over the
equivalent channel in (\ref{Eq Equivalent Channel}), we obtain the
proposed new EDA precoding
matrix as%
\begin{eqnarray}
\mathbf{F}_{m} &=&\widetilde{\mathbf{H}}_{m,R}^{T}\left( \widetilde{\mathbf{H%
}}_{m,R}\widetilde{\mathbf{H}}_{m,R}^{T}\right) ^{-1}\mathbf{\Psi
}_{m}
\notag \\
&=&\mathbf{H}_{m,R}^{T}\left(
\mathbf{H}_{m,R}\mathbf{H}_{m,R}^{T}\right) ^{-1}\mathbf{K\Psi
}_{m},m\in \left\{ A,B\right\} .  \label{Eq_Precoder}
\end{eqnarray}%
The signal received by the relay in (\ref{Eq_SystemModel_Uplink})
can then
be written as%
\begin{equation}
Y_{R}\left[ l\right] =\mathbf{H}_{A,R}\mathbf{F}_{A}C_{A}\left[ l\right] +%
\mathbf{H}_{B,R}\mathbf{F}_{B}C_{B}\left[ l\right] +Z_{R}\left[ l\right] =%
\mathbf{K}\left( \mathbf{\Psi }_{A}C_{A}\left[ l\right] +\mathbf{\Psi }%
_{B}C_{B}\left[ l\right] \right) +Z_{R}\left[ l\right]
\end{equation}%
where $l=1,\cdots ,n$. At the relay, after the linear transformation (\ref%
{Eq Linear Transformation}), we obtain
\begin{equation}
\widetilde{Y}_{R}\left[ l\right] =\mathbf{\Psi }_{A}C_{A}\left[ l\right] +%
\mathbf{\Psi }_{B}C_{B}\left[ l\right] +\widetilde{Z}_{R}\left[
l\right] \label{Eq_Relay_Signal_Multiply_K Step2}
\end{equation}%
where $\widetilde{Z}_{R}\left[ l\right]
=\mathbf{K}^{-1}Z_{R}\left[ l\right] $ is the equivalent noise
vector. From (\ref{Eq_Relay_Signal_Multiply_K Step2}), it is clear
that $n_{R}$ aligned eigen-modes are established. Note that we can
always scale the entries of $\widetilde{Y}_{R}$ such that the
equivalent noises of all eigen-modes have unit power. Thus,
without loss of generality, we confine the rotation matrix
$\mathbf{K}$ that the diagonal elements of $\mathbf{K}^{-1}\left(
\mathbf{K}^{-1}\right) ^{T}$ are 1, i.e.,
\begin{equation}
\left[ \mathbf{K}^{-1}\left( \mathbf{K}^{-1}\right) ^{T}\right] _{\text{diag}%
}=\mathbf{I}\text{.}  \label{Eq K Constraint}
\end{equation}%
This is to ensure that the entries in the effective noise vector $\widetilde{%
Z}_{R}\left[ l\right] $ have unit power.

The proposed EDA precoding scheme reduces to the naive EDA scheme
by letting $\mathbf{K}$ $=\mathbf{I}$. By varying the rotation
matrix $\mathbf{K}$, we can actually align the eigen-modes of the
two users into any $n_{R}$ pre-determined directions in the
$n_{R}$-dimension vector space, as illustrated in Fig.
\ref{Fig_ZF_ECA}. An immediate question is how to determined the
optimal rotation matrix $\mathbf{K}$. We will retain the answer to
this problem till the next section.

\subsection{The Overall Proposed EDA-PNC Scheme}

We now describe a multi-stream PNC scheme. In the uplink phase,
the proposed EDA precoding (\ref{Eq_Precoder}) is employed to
establish $n_{R}$ aligned parallel sub-channels. The two users
perform single-stream PNC for each aligned sub-channel, and there
are $n_{R}$ independent PNC streams in total.
Similarly to the case of SISO PNC \cite{NamIT09}, the relay recovers the $%
n_{R}$ bin-indices (as defined in \cite{NamIT09})\ instead of
completely decoding both users' individual messages. In the
downlink phase, the aggregation of the $n_{R}$ bin-indices is
re-encoded and broadcast to the two users. Finally, each user
recovers the other user's message with the help of the perfect
knowledge of its own message.

\section{Achievable Rates of the Proposed EDA-PNC Scheme for MIMO TWRCs}

\subsection{Achievable Rate-Pair}

We now present a theorem on the achievable rate-pair of the
proposed EDA-PNC scheme. Define $x^{+}\triangleq \max \left(
x,0\right) .$

\begin{theo}
\label{Theorem2}For given $\mathbf{K}$, $\mathbf{\Psi }_{A}$, $\mathbf{\Psi }%
_{B}$ and $\mathbf{Q}_{R}$, an achievable rate-pair of the
proposed EDA-PNC scheme is given by
\begin{subequations}
\label{EqTheorem2}
\begin{align}
R_{A}& \leq \min \left\{ R_{A,UL}^{EDA},R_{A,DL}^{EDA}\right\}
\label{EqATheorem2} \\
R_{B}& \leq \min \left\{ R_{B,UL}^{EDA},R_{B,DL}^{EDA}\right\}
\label{EqBTheorem2}
\end{align}%
\end{subequations}
where
\begin{subequations}
\begin{align}
R_{A,UL}^{EDA}& =\frac{1}{2}\dsum\limits_{i=1}^{n_{R}}\left[ \log
\left( \frac{\Psi _{A}\left( i,i\right) ^{2}}{\Psi _{A}\left(
i,i\right) ^{2}+\Psi _{B}\left( i,i\right) ^{2}}+\Psi _{A}\left(
i,i\right) ^{2}\right) \right]
^{+},  \label{Eq RateA UL original} \\
R_{B,UL}^{EDA}& =\frac{1}{2}\dsum\limits_{i=1}^{n_{R}}\left[ \log
\left( \frac{\Psi _{B}\left( i,i\right) ^{2}}{\Psi _{A}\left(
i,i\right) ^{2}+\Psi _{B}\left( i,i\right) ^{2}}+\Psi _{B}\left(
i,i\right) ^{2}\right) \right]
^{+},  \label{Eq RateB UL original} \\
R_{A,DL}^{EDA}& =\frac{1}{2}\log \det \left( \mathbf{I}+\mathbf{H}_{R,B}%
\mathbf{Q}_{R}\mathbf{H}_{R,B}^{T}\right) ,  \label{Eq RateA DL} \\
R_{B,DL}^{EDA}& =\frac{1}{2}\log \det \left( \mathbf{I}+\mathbf{H}_{R,A}%
\mathbf{Q}_{R}\mathbf{H}_{R,A}^{T}\right) .  \label{Eq RateB DL}
\end{align}
\end{subequations}
\end{theo}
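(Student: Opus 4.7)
The plan is to establish the uplink and downlink rate constraints in (\ref{EqTheorem2}) separately and then combine them via the two-phase structure: since each user's message must traverse both the uplink and the downlink phase, the overall achievable rate is the minimum of what the two phases can individually support.

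I would begin with the uplink. Substituting the EDA precoder (\ref{Eq_Precoder}) into the uplink model (\ref{Eq_SystemModel_Uplink}) and applying the invertible rotation $\mathbf{K}^{-1}$ at the relay reproduces (\ref{Eq_Relay_Signal_Multiply_K Step2}), which decouples into $n_R$ scalar sub-channels
\begin{equation*}
\widetilde{y}_{R,i}[l]=\Psi_{A}(i,i)\,c_{A,i}[l]+\Psi_{B}(i,i)\,c_{B,i}[l]+\widetilde{z}_{R,i}[l],\quad i=1,\ldots,n_{R},
\end{equation*}
where the constraint (\ref{Eq K Constraint}) forces each marginal $\widetilde{z}_{R,i}$ to be unit-variance Gaussian. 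On sub-channel $i$ this is precisely a scalar Gaussian TWRC with asymmetric per-user powers $\Psi_{A}(i,i)^{2}$ and $\Psi_{B}(i,i)^{2}$, so I would apply the nested-lattice PNC construction of \cite{NamIT09} (with the coarse lattices rescaled to the respective powers) to conclude that the relay can recover the bin-index of user $m$ at rate $[\log(\Psi_{m}(i,i)^{2}/(\Psi_{A}(i,i)^{2}+\Psi_{B}(i,i)^{2})+\Psi_{m}(i,i)^{2})]^{+}$ per uplink channel use. Using independent codebooks across the $n_{R}$ aligned streams and summing, with the factor of one-half accounting for the two-phase protocol, yields $R_{A,UL}^{EDA}$ and $R_{B,UL}^{EDA}$ as in (\ref{Eq RateA UL original}) and (\ref{Eq RateB UL original}).

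For the downlink, I would have the relay aggregate the $n_{R}$ recovered per-stream bin-index sums into a single message and broadcast it using a standard Gaussian MIMO codebook with covariance $\mathbf{Q}_{R}$. Each user, knowing its own codeword and hence its own bin-indices on every sub-channel, needs only to decode this aggregated relay message and then subtract off its own contribution to recover the other user's bin-indices (and ultimately its message). Because user $B$'s message therefore reaches user $A$ through the MIMO link $\mathbf{H}_{R,A}$, while user $A$'s message reaches user $B$ through $\mathbf{H}_{R,B}$, the standard real-valued MIMO capacity formula together with the two-phase normalization delivers $R_{A,DL}^{EDA}$ and $R_{B,DL}^{EDA}$ in the forms (\ref{Eq RateA DL}) and (\ref{Eq RateB DL}).

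The main technical subtlety I expect to have to address is that, although (\ref{Eq K Constraint}) forces each marginal component of $\widetilde{Z}_{R}=\mathbf{K}^{-1}Z_{R}$ to have unit variance, the noise is in general correlated across the $n_{R}$ aligned sub-channels (the off-diagonal entries of $\mathbf{K}^{-1}(\mathbf{K}^{-1})^{T}$ need not vanish). The resolution is to note that the proposed achievability uses independent nested-lattice codes on each stream and decodes them sub-channel by sub-channel, so each per-stream decoder sees only the marginal unit-variance Gaussian statistic; the cross-stream correlation is simply an unexploited opportunity rather than an impairment, and the single-sub-channel Nam--Chung--Lee bound applies verbatim. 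A secondary point I would verify is the asymmetric-power extension of the nested-lattice PNC rate, which in particular explains the $[\,\cdot\,]^{+}$ clipping in (\ref{Eq RateA UL original})--(\ref{Eq RateB UL original}) for the degenerate regime in which one of the two per-stream powers is too small to sustain a positive PNC rate.
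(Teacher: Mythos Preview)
Your proposal is correct and mirrors the paper's own proof in Appendix~II: per-stream nested-lattice PNC on the EDA-aligned scalar sub-channels for the uplink (citing Theorem~3 of \cite{NamIT09}), followed by Gaussian-codebook broadcast of the aggregated ``super bin-index'' for the downlink with each user decoding over a side-information-reduced sub-codebook. One minor bookkeeping correction: the factor $\tfrac{1}{2}$ in (\ref{Eq RateA UL original})--(\ref{Eq RateB DL}) comes from the real-valued Gaussian capacity formula, not from the two-phase protocol---rates here are already normalized per single-phase channel use (cf.\ the remark following Definition~\ref{Definition1}), so no additional halving is incurred; your explicit treatment of the cross-stream noise correlation under non-unitary $\mathbf{K}$, which the paper simply glosses over by calling the sub-channels ``orthogonal,'' is a valid refinement.
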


The proof of Theorem \ref{Theorem2} is given in Appendix II. The
main idea
of the proof is to utilize the results on nested lattice codes in \cite%
{NamIT09}.

\subsection{An Asymptotic Result on the Achievable Rate-Pair}

We next derive an asymptotic result which is based on the
following observation.

\begin{fact}
Assume that the entries of the channel matrix $\mathbf{H}_{m,R}$
are i.i.d.
with zero mean and unit variance. Then,%
\begin{equation}
\frac{1}{n_{T}}\mathbf{H}_{m,R}\mathbf{H}_{m,R}^{T}\overset{\text{P}}{%
\rightarrow }\mathbf{I}\text{, as }n_{T}\rightarrow \mathbf{\infty
,} \label{Eq Convergence to one}
\end{equation}%
where \textquotedblleft $\overset{\text{P}}{\rightarrow }$%
\textquotedblright\ represents convergence in probability.
\end{fact}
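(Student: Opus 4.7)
The plan is to prove the stated convergence entry-wise, using the weak law of large numbers on each scalar entry of the matrix $\frac{1}{n_T}\mathbf{H}_{m,R}\mathbf{H}_{m,R}^T$, and then lifting entry-wise convergence to matrix convergence. Since $n_R$ is held fixed while $n_T \to \infty$, the matrix has a fixed (finite) number of entries, so the lifting step is immediate.

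Concretely, write $\mathbf{H}_{m,R} = [h_{ij}]$ with $i=1,\dots,n_R$ and $j=1,\dots,n_T$, where the $h_{ij}$ are i.i.d.\ with $\mathcal{E}(h_{ij})=0$ and $\mathcal{E}(h_{ij}^2)=1$. The $(i,k)$-th entry of $\frac{1}{n_T}\mathbf{H}_{m,R}\mathbf{H}_{m,R}^T$ is
\begin{equation*}
S_{ik} \;\triangleq\; \frac{1}{n_T}\sum_{j=1}^{n_T} h_{ij}\, h_{kj}.
\end{equation*}
The random variables $\{h_{ij}h_{kj}\}_{j=1}^{n_T}$ are i.i.d.\ across $j$ for fixed $(i,k)$. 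For the diagonal case $i=k$, each summand $h_{ij}^2$ has mean $1$, so by the weak law of large numbers $S_{ii}\overset{\text{P}}{\to}1$. For the off-diagonal case $i\neq k$, the summand $h_{ij}h_{kj}$ has mean $\mathcal{E}(h_{ij})\mathcal{E}(h_{kj})=0$ by independence of rows, so the weak law of large numbers gives $S_{ik}\overset{\text{P}}{\to}0$. Thus every entry of the matrix converges in probability to the corresponding entry of $\mathbf{I}$.

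To pass from entry-wise convergence to matrix convergence, I would observe that there are only $n_R^2$ entries (a fixed number), so a simple union bound shows that for any $\varepsilon>0$,
\begin{equation*}
\Pr\!\left\{\max_{i,k}\bigl|[S]_{ik} - [\mathbf{I}]_{ik}\bigr|>\varepsilon\right\}
\;\leq\; \sum_{i,k=1}^{n_R}\Pr\!\left\{\bigl|S_{ik}-[\mathbf{I}]_{ik}\bigr|>\varepsilon\right\}\;\to\;0,
\end{equation*}
which yields convergence in any matrix norm (e.g., the Frobenius or spectral norm, which are equivalent in finite dimension).

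There is essentially no hard step here: the result is a direct finite-dimensional instance of the weak law of large numbers. The only subtlety worth being explicit about is that the claim keeps $n_R$ fixed while only $n_T\to\infty$; if $n_R$ were also allowed to grow with $n_T$, one would need a genuine random-matrix argument (e.g., Marchenko–Pastur type control of the spectrum) instead of the elementary union bound used above. If desired, one could strengthen the statement to almost-sure convergence under a mild moment assumption (e.g., finite fourth moment) by invoking the strong law of large numbers on each entry.
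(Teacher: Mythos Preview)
Your proposal is correct and follows exactly the approach the paper takes: the paper simply states that the result is ``straightforward by invoking the weak law of large numbers,'' and your entry-wise application of the WLLN with a union bound over the fixed $n_R^2$ entries is precisely the intended argument spelled out in detail.
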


The above result is straightforward by invoking the weak law of
large numbers.

\begin{theo}
\label{Theorem4}Assume that the channel coefficients in
$\mathbf{H}_{A,R}$
and $\mathbf{H}_{B,R}$ are i.i.d. withe zero mean and unit variance. As $%
n_{T}$ tends to infinity (while $n_{R}$ remains finite), the
proposed EDA-PNC scheme asymptotically achieves the capacity of a
MIMO TWRC in probability. $\square $
\end{theo}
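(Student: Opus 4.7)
The plan is to verify that for a specific explicit choice of EDA parameters, both the EDA-PNC rate pair from Theorem~2 and the capacity upper bound from Lemma~1 converge in probability to the same asymptote as $n_T \to \infty$. I observe first that the downlink formulas $R_{A,DL}^{EDA}$ and $R_{B,DL}^{EDA}$ in Theorem~2 coincide identically (as functions of $\mathbf{Q}_R$) with the downlink terms in the cut-set bound. Therefore the downlink is automatically matched and only the uplink requires a convergence argument.

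For the uplink, I would not try to solve the general optimization; instead I pick the simple rotation $\mathbf{K}=\mathbf{I}$ with flat, symmetric power allocations $\boldsymbol{\Psi}_A=\boldsymbol{\Psi}_B=\psi\mathbf{I}$. Applying the Fact, the resulting per-user transmit power is
\[
\text{Tr}(\mathbf{Q}_m) = \psi^2\,\text{Tr}\!\left((\mathbf{H}_{m,R}\mathbf{H}_{m,R}^T)^{-1}\right)\overset{\mathrm{P}}{\rightarrow}\frac{n_R\psi^2}{n_T},
\]
so fixing $\psi^2 = \tfrac{n_T P_T}{2n_R}$ asymptotically saturates the sum-power constraint. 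Plugging this choice into Theorem~2 gives $R_{A,UL}^{EDA}=R_{B,UL}^{EDA}=\tfrac{n_R}{2}\log\!\left(\tfrac12+\psi^2\right)$.

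On the upper-bound side, the same choice of $\mathbf{Q}_A$ yields $\mathbf{H}_{A,R}\mathbf{Q}_A\mathbf{H}_{A,R}^T = \mathbf{K}\boldsymbol{\Psi}_A^2\mathbf{K}^T = \psi^2\mathbf{I}$ exactly, so the cut-set uplink UB evaluated at this covariance equals $\tfrac{n_R}{2}\log(1+\psi^2)$. It then remains to argue this covariance is asymptotically UB-maximizing. Using SVDs $\mathbf{H}_{m,R}=\mathbf{U}_m\mathbf{D}_m\mathbf{V}_m^T$ with $\mathbf{D}_m^2/n_T\overset{\mathrm{P}}{\rightarrow}\mathbf{I}$, water-filling over the constraint $\text{Tr}(\mathbf{Q}_m)\le P_T/2$ assigns uniform per-mode power $P_T/(2n_R)$ in the limit, giving the same asymptotic value $\tfrac{n_R}{2}\log(1+\psi^2)$ for the optimized UB. The residual per-stream gap
\[
\log(1+\psi^2)-\log\!\left(\tfrac12+\psi^2\right)=\log\!\left(1+\frac{1}{1+2\psi^2}\right)
\]
vanishes as $\psi^2 = \Theta(n_T)\to\infty$, and summing over the $n_R$ streams preserves the vanishing. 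Hence both rates converge in probability to the same asymptote, establishing the theorem.

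The main obstacle I anticipate is pushing convergence in probability cleanly through the $\log\det$ functional and through the water-filling optimization when the singular values of $\mathbf{D}_m$ only concentrate loosely at $\sqrt{n_T}$ for any finite $n_T$. This should follow from standard singular-value perturbation bounds combined with the continuous mapping theorem, exploiting monotonicity of $\log\det$ and compactness of the feasible covariance set $\{\mathbf{Q}_m:\text{Tr}(\mathbf{Q}_m)\le P_T/2\}$; no additional randomness-related ingredient is needed beyond the Fact stated immediately before the theorem.
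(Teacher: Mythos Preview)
Your proposal is correct and follows essentially the same route as the paper's proof in Appendix~III: both note the downlink terms coincide, then for the uplink pick the naive rotation $\mathbf{K}=\mathbf{I}$ with flat power matrices scaled like $\sqrt{n_T}$, verify the power constraint via the Fact, and show that the EDA-PNC uplink rate and the cut-set uplink bound both equal $\tfrac{n_R}{2}\log\bigl(n_T P_{m,T}/n_R\bigr)+o(1)$. The only difference is that the paper keeps an arbitrary user-power split $P_{A,T}+P_{B,T}=P_T$ (setting $\boldsymbol{\Psi}_m=\sqrt{n_T P_{m,T}/n_R}\,\mathbf{I}$) so as to cover every boundary point of the rate region, whereas your symmetric choice $\boldsymbol{\Psi}_A=\boldsymbol{\Psi}_B$ nails down only the equal-rate corner; the extension to general $P_{A,T},P_{B,T}$ is immediate and you should state it.
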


The proof of Theorem \ref{Theorem4} is given in Appendix III. Theorem \ref%
{Theorem4} states that the proposed EDA-PNC scheme achieves the
capacity
upper bound of a MIMO TWRC with probability 1 as $n_{T}\rightarrow \mathbf{%
\infty }$. This asymptotic result will be verified by the
numerical results presented later.

\subsection{Determining the Achievable Rate-Region}

Here, we consider the achievable rate-region of the proposed
EDA-PNC scheme, based on the results of Theorem \ref{Theorem2}.
Define the following rate-regions
\begin{subequations}
\begin{eqnarray}
\mathcal{R}_{UL}^{EDA} &\triangleq &\left\{ \left(
R_{A},R_{B}\right) :R_{A}\leq R_{A,UL}^{EDA},R_{B}\leq
R_{B,UL}^{EDA}\right\} ,
\label{Eq Rate Region UL} \\
\mathcal{R}_{DL}^{EDA} &\triangleq &\left\{ \left(
R_{A},R_{B}\right) :R_{A}\leq R_{A,DL}^{EDA},R_{B}\leq
R_{B,DL}^{EDA}\right\} .
\end{eqnarray}%
\end{subequations}
The above two rate-regions will be respectively determined in the
following.

\subsubsection{Uplink Rate-Region}

The boundary of the uplink rate-region $\mathcal{R}_{UL}^{EDA}$
can be determined by solving the following weighted sum-rate (WSR)
problem
\begin{subequations}
\label{Eq WSR_Constraint Optimal EDA Precoder}
\begin{equation}
\max_{\mathbf{K,\Psi }_{A},\text{ }\mathbf{\Psi }_{B}}\left\{
\alpha R_{A,UL}^{EDA}+\left( 1-\alpha \right)
R_{B,UL}^{EDA}\right\} \label{Eq WSR Optimal EDA Precoder}
\end{equation}%
subject to%
\begin{equation}
\text{Tr}\left( \left( \mathbf{H}_{A,R}\mathbf{H}_{A,R}^{T}\right) ^{-1}%
\mathbf{K\Psi }_{A}^{2}\mathbf{K}^{T}+\left( \mathbf{H}_{B,R}\mathbf{H}%
_{B,R}^{T}\right) ^{-1}\mathbf{K\Psi
}_{B}^{2}\mathbf{K}^{T}\right) \leq P_{T}.  \label{Eq Constraint
Optimal EDA Precoder}
\end{equation}%
\end{subequations} and (\ref{Eq K Constraint}), for $0\leq \alpha \leq 1$. Note
that the power constraint in (\ref{Eq
Constraint Optimal EDA Precoder}) is obtained by substituting (\ref%
{Eq_Precoder}) and
$\mathbf{Q}_{m}=\mathbf{F}_{m}\mathbf{F}_{m}^{T}$, $m\in \left\{
A,B\right\} $, into (\ref{Eq_PowerConstraint}).

The problem in (\ref{Eq WSR_Constraint Optimal EDA Precoder}) is
non-convex and hence is difficult to solve. For a small $n_{R}$,
e.g., $n_{R}=2$, the optimal\ parameters $\left( \mathbf{K,\Psi
}_{A},\mathbf{\Psi }_{B}\right) $ can be found by an exhaustive
search. Unfortunately, this method quickly becomes prohibitively
complex as $n_{R}$ increases. We will provide approximate
solutions to this problem in Section VI.

\subsubsection{Downlink Rate-Region}

The boundary of the downlink rate-region $\mathcal{R}_{DL}^{EDA}$
can be
determined by solving%
\begin{equation}
\max_{\mathbf{Q}_{R}:\text{ Tr}\left( \mathbf{Q}_{R}\right) \leq
P_{R}}\left\{ \alpha R_{A,DL}^{EDA}+\left( 1-\alpha \right)
R_{B,DL}^{EDA}\right\}  \label{Eq Rate star}
\end{equation}%
for $0\leq \alpha \leq 1$. Note that $\log \det \left( \cdot
\right) \ $is concave and thus the objective function in (\ref{Eq
Rate star}) is concave in $\mathbf{Q}_{R}$. In addition, Tr$\left(
\mathbf{Q}_{R}\right) \leq P_{R}$ is a linear constraint.
Therefore, (\ref{Eq Rate star}) can be solved using convex
optimization.

\subsubsection{Overall Rate-Region}

The overall achievable rate-region of the proposed EDA-PNC scheme
is the intersection of $\mathcal{R}_{UL}^{EDA}$ and
$\mathcal{R}_{DL}^{EDA}$.

The major difficulty in determining the above achievable
rate-region of the proposed EDA-PNC scheme is to solve the WSR
problem in (\ref{Eq WSR_Constraint Optimal EDA Precoder}). In the
next section, we will provide two suboptimal solutions to this
problem.

\section{Approximate Solutions to the Optimal EDA Precoder}

\subsection{Approximate Solution I}

To simplify the problem in (\ref{Eq WSR_Constraint Optimal EDA
Precoder}), we introduce
two extra constraints on the proposed EDA precoder: 1) The rotation matrix $%
\mathbf{K}$ is unitary, i.e.,
\begin{equation}
\mathbf{KK}^{T}=\mathbf{I},  \label{Limitation 1}
\end{equation}%
and 2) The power matrices satisfy
\begin{equation}
\mathbf{\Psi }_{A}=\mathbf{\Sigma },\text{ }\mathbf{\Psi
}_{B}=\gamma \mathbf{\Sigma },  \label{Limitation 2}
\end{equation}%
where $\gamma $ is a positive scalar and $\mathbf{\Sigma }$ is a
diagonal matrix with non-negative diagonal elements. Although
these extra constraints may lead to a certain performance loss, a
close-form solution then exists, which yields crucial insights
into the design of the EDA precoder. Later, we will consider the
relaxation of these two constraints to obtain a better approximate
solution.

\subsubsection{Optimal Unitary Rotation Matrix $\mathbf{K}$ (for $\mathbf{%
\Psi }_{B}=\protect\gamma \mathbf{\Psi }_{A}$)}

Now we derive the most power-efficient unitary rotation matrix
$\mathbf{K}$
for given $\mathbf{\Psi }_{A}=\mathbf{\Sigma }$ and $\mathbf{\Psi }%
_{B}=\gamma \mathbf{\Sigma }$. The problem is formulated as%
\begin{equation}
\mathbf{K}_{opt}^{(\mathbf{\Sigma },\gamma )}=\arg \min_{\mathbf{K:}\text{ }%
\mathbf{KK}^{T}=\mathbf{I}}\text{Tr}\left( \mathbf{F}_{A}\mathbf{F}_{A}^{H}+%
\mathbf{F}_{B}\mathbf{F}_{B}^{H}\right)  \label{Eq Problem}
\end{equation}

Let the singular value decomposition (SVD) of the channel matrix $\mathbf{H}%
_{m,R}$ be
\begin{equation}
\mathbf{H}_{m,R}=\mathbf{U}_{m}\mathbf{\Sigma
}_{m}\mathbf{V}_{m}^{T},\text{ }m\in \left\{ A,B\right\} ,
\label{Eq SVD}
\end{equation}%
where $\mathbf{U}_{m}$ and $\mathbf{V}_{m}^{T}$ are unitary matrices and $%
\mathbf{\Sigma }_{m}$ is an $n_{R}$-by-$n_{T}$ diagonal matrix
with positive diagonal elements. Denote by $\mathbf{\Sigma
}_{m}^{-1}$ the pseudo-inverse of $\mathbf{\Sigma }_{m}$, i.e.,
\begin{equation*}
\mathbf{\Sigma }_{m}^{-1}=\left[
\begin{array}{c}
\mathbf{\Upsilon }_{m}^{-1} \\
\mathbf{0}_{(n_{T}-n_{R})\times n_{R}}%
\end{array}%
\right]
\end{equation*}%
where $\mathbf{\Upsilon }_{m}$ is an $n_{R}$-by-$n_{R}$ matrix
formed by the
first $n_{R}$ columns of $\mathbf{\Sigma }_{m}$ and $\mathbf{0}%
_{(n_{T}-n_{R})\times n_{R}}$ denotes an
$(n_{T}-n_{R})$-by-$n_{R}$ matrix
with all-zero entries. For notational simplicity, we denote $\left( \mathbf{%
\Sigma }_{m}\mathbf{\Sigma }_{m}^{T}\right) ^{-1}$ by $\mathbf{\Sigma }%
_{m}^{-2}$. Then, using (\ref{Eq_Precoder}) and (\ref{Eq SVD}), the problem (%
\ref{Eq Problem}) becomes
\begin{equation}
\mathbf{K}_{opt}^{(\mathbf{\Sigma },\gamma )}=\arg \min_{\mathbf{K:}\text{ }%
\mathbf{KK}^{T}=\mathbf{I}}\text{Tr}\left( \left( \mathbf{U}_{A}\mathbf{%
\Sigma }_{A}^{-2}\mathbf{U}_{A}^{T}+\gamma ^{2}\mathbf{U}_{B}\mathbf{\Sigma }%
_{B}^{-2}\mathbf{U}_{B}^{T}\right) \mathbf{K\Sigma K}^{T}\right) .
\label{Eq_K_opt}
\end{equation}%
Define
\begin{equation}
\mathbf{G}\left( \gamma \right) \triangleq \mathbf{U}_{A}\mathbf{\Sigma }%
_{A}^{-2}\mathbf{U}_{A}^{T}+\gamma ^{2}\mathbf{U}_{B}\mathbf{\Sigma }%
_{B}^{-2}\mathbf{U}_{B}^{T}.  \label{Eq Definition of G}
\end{equation}%
The eigen-decomposition of $\mathbf{G}\left( \gamma \right) $ yields $%
\mathbf{G}\left( \gamma \right) =\mathbf{U}_{G\left( \gamma \right) }\mathbf{%
\Lambda }_{G\left( \gamma \right) }\mathbf{U}_{G\left( \gamma
\right) }^{T}$ where $\mathbf{\Lambda }_{G\left( \gamma \right) }$
is a diagonal matrix
with the diagonal entries arranged in the \textit{ascending order}, and $%
\mathbf{U}_{G\left( \gamma \right) }$ is a unitary matrix. Without
loss of generality, we always assume that the diagonal entries of
$\mathbf{\Sigma }$ are arranged in the \textit{descending order}.
Now, we present the optimal unitary rotation matrix $\mathbf{K}$
in the following theorem.

\begin{theo}
\label{Theorem1}For any given $\mathbf{\Psi }_{A}=\mathbf{\Sigma }$ and $%
\mathbf{\Psi }_{B}=\gamma \mathbf{\Sigma }$, the solution to the problem in (%
\ref{Eq_K_opt}) is%
\begin{equation}
\mathbf{K}_{opt}^{(\gamma )}=\mathbf{U}_{G\left( \gamma \right) }.
\label{Eq Theorem 1}
\end{equation}
\end{theo}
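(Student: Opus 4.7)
My plan is to reduce the constrained trace minimization in (\ref{Eq_K_opt}) to a classical rearrangement problem via an orthogonal change of variables. The matrix $\mathbf{G}(\gamma)$ defined in (\ref{Eq Definition of G}) is symmetric and positive definite, so its eigendecomposition $\mathbf{G}(\gamma)=\mathbf{U}_{G(\gamma)}\mathbf{\Lambda}_{G(\gamma)}\mathbf{U}_{G(\gamma)}^{T}$ isolates the intrinsic structure of the problem. I would substitute $\mathbf{K}=\mathbf{U}_{G(\gamma)}\mathbf{W}$, where $\mathbf{W}$ ranges over the orthogonal group (this map is a bijection on $\{\mathbf{K}:\mathbf{KK}^{T}=\mathbf{I}\}$), to rewrite the objective as
\[
\text{Tr}\left(\mathbf{\Lambda}_{G(\gamma)}\,\mathbf{W}\,\mathbf{\Sigma}^{2}\,\mathbf{W}^{T}\right).
\]
This is the canonical form ``trace of a diagonal matrix against a unitarily rotated diagonal matrix.''

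The key step is to expand this trace explicitly as $\sum_{i,j}\lambda_{i}\,\sigma_{j}^{2}\,p_{ij}$, where $\lambda_{i}$ are the diagonal entries of $\mathbf{\Lambda}_{G(\gamma)}$ (in ascending order by construction), $\sigma_{j}$ are the diagonal entries of $\mathbf{\Sigma}$ (in descending order by assumption), and $p_{ij}\triangleq w_{ij}^{2}$. Because $\mathbf{W}$ is orthogonal, the matrix $(p_{ij})$ is doubly stochastic. By the Birkhoff--von Neumann theorem, the set of doubly stochastic matrices is the convex hull of permutation matrices, and the linear functional $\sum_{i,j}\lambda_{i}\sigma_{j}^{2}p_{ij}$ therefore attains its minimum at some permutation $P_{\pi}$. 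The rearrangement inequality then pins down $\pi$: since the sequence $\{\lambda_{i}\}$ is ascending and $\{\sigma_{j}^{2}\}$ is descending, $\sum_{j}\lambda_{\pi(j)}\sigma_{j}^{2}$ is minimized exactly by the identity permutation $\pi=\mathrm{id}$.

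The chosen extreme point is realizable by a genuine orthogonal $\mathbf{W}$ because any permutation matrix is itself orthogonal; in particular $\mathbf{W}=\mathbf{I}$ attains the infimum, yielding $\mathbf{K}_{opt}^{(\gamma)}=\mathbf{U}_{G(\gamma)}$ as claimed in (\ref{Eq Theorem 1}). I expect the main obstacle to be not a hard inequality but rather the bookkeeping around sorting conventions: one must carefully align the ascending ordering of $\mathbf{\Lambda}_{G(\gamma)}$ with the descending ordering of $\mathbf{\Sigma}$ so that the rearrangement inequality selects the identity permutation rather than the reversing one, and one must also remark that the map $\mathbf{W}\mapsto(w_{ij}^{2})$ from the orthogonal group to doubly stochastic matrices is not surjective, so the Birkhoff--von Neumann reduction is only valid because the optimizing permutation lies in the image. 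Uniqueness up to signed-permutation ambiguity would then follow from the strict rearrangement inequality whenever the $\lambda_{i}$ and $\sigma_{j}^{2}$ are strictly monotone.
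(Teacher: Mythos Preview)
Your proof is correct and rests on the same inequality the paper uses: the paper's proof simply invokes the trace inequality $\text{Tr}(\mathbf{M}\mathbf{N})\leq \text{Tr}(\mathbf{\Lambda}_{M}\mathbf{\Lambda}_{N})$ for Hermitian $\mathbf{M},\mathbf{N}$ with oppositely ordered eigenvalues (citing \cite{TangTWC07}, \cite{HornTextbook}), applied with $\mathbf{M}=\mathbf{G}(\gamma)$ and $\mathbf{N}=\mathbf{K}\mathbf{\Sigma}^{2}\mathbf{K}^{T}$, and observes that equality is attained at $\mathbf{K}=\mathbf{U}_{G(\gamma)}$. Your Birkhoff--von Neumann plus rearrangement argument is precisely a standard proof of that cited inequality, so you have effectively reproduced the paper's argument while supplying the details the paper leaves to a reference; the only practical difference is that the paper's version is a two-line citation whereas yours is self-contained.
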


\begin{proof}
With (\ref{Eq Definition of G}), the objective function in
(\ref{Eq_K_opt})
is written as%
\begin{equation}
\text{Tr}\left( \mathbf{G}\left( \gamma \right) \mathbf{K\Sigma }^{2}\mathbf{%
K}^{T}\right) =\text{Tr}\left( \mathbf{U}_{G\left( \gamma \right) }\mathbf{%
\Lambda }_{G\left( \gamma \right) }\mathbf{U}_{G\left( \gamma \right) }^{T}%
\mathbf{K\Sigma }^{2}\mathbf{K}^{T}\right) \leq \text{Tr}\left( \mathbf{%
\Lambda }_{G\left( \gamma \right) }\mathbf{\Sigma }^{2}\right) .
\label{Eq Step1 Proof of Theorem1}
\end{equation}%
where the equality in the last step holds when
$\mathbf{K=U}_{G\left( \gamma \right) }$. The inequality in
(\ref{Eq Step1 Proof of Theorem1}) follows the
fact \cite{TangTWC07}, \cite{HornTextbook}: for any two hermitian matrix $\mathbf{M}$ and $%
\mathbf{N}$ with eigen decomposition $\mathbf{M}=\mathbf{U}_{M}\mathbf{%
\Lambda }_{M}\mathbf{U}_{M}^{T}$ and $\mathbf{N}=\mathbf{U}_{N}\mathbf{%
\Lambda }_{N}\mathbf{U}_{N}^{T}$,
\begin{equation}
\text{Tr}\left( \mathbf{MN}\right) \leq \text{Tr}\left( \mathbf{\Lambda }_{M}%
\mathbf{\Lambda }_{N}\right)  \label{Eq_Trace_inequlity}
\end{equation}%
where the diagonal elements of $\mathbf{\Lambda }_{M}$ and those $\mathbf{%
\Lambda }_{N}$ are reversely ordered. This finishes the proof.
\end{proof}

We have the following comments on Theorem \ref{Theorem1}.

\begin{rema}
\label{Corollary 1}The optimal unitary rotation matrix $\mathbf{K}%
_{opt}^{(\gamma )}$ is dependent of $\gamma $, but not of $\mathbf{\Sigma }$%
. Thus, we write $\mathbf{K}_{opt}^{(\gamma )}$ instead of $\mathbf{K}%
_{opt}^{(\mathbf{\Sigma },\gamma )}$.
\end{rema}

\begin{rema}
With $\mathbf{K}_{opt}^{(\gamma )}=\mathbf{U}_{G\left( \gamma
\right) }$, the power constraint in (\ref{Eq_PowerConstraint}) can
be expressed as
\begin{equation}
\text{Tr}\left[ \mathbf{G}\left( \gamma \right) \mathbf{K}_{opt}^{(\gamma )}%
\mathbf{\Sigma }^{2}\left( \mathbf{K}_{opt}^{(\gamma )}\right) ^{T}\right] =%
\text{Tr}\left( \mathbf{\Lambda }_{G\left( \gamma \right) }\mathbf{\Sigma }%
^{2}\right) \leq P_{T}\text{.}  \label{Eq Theorem1 Power
Constraint}
\end{equation}
\end{rema}

\subsubsection{Uplink Rate-Region Revisited}

Here, we present an approximate solution to the the uplink
achievable
rate-region $\mathcal{R}_{UL}^{EDA}$ of the proposed EDA-PNC scheme. With $%
\mathbf{\Psi }_{A}=\mathbf{\Sigma }$ and $\mathbf{\Psi }_{B}=\gamma \mathbf{%
\Sigma }$, (\ref{Eq RateA UL original}) and (\ref{Eq RateB UL
original}) become
\begin{subequations}
\label{Eq Rate UL}
\begin{align}
\widetilde{R}_{A,UL}^{EDA}&
=\frac{1}{2}\dsum\limits_{i=1}^{n_{R}}\left[
\log \left( \frac{1}{1+\gamma ^{2}}+\Sigma \left( i,i\right) ^{2}\right) %
\right] ^{+},  \label{Eq RateA UL} \\
\widetilde{R}_{B,UL}^{EDA}&
=\frac{1}{2}\dsum\limits_{i=1}^{n_{R}}\left[ \log \left(
\frac{\gamma ^{2}}{1+\gamma ^{2}}+\gamma ^{2}\Sigma \left(
i,i\right) ^{2}\right) \right] ^{+}  \label{Eq RateB UL}
\end{align}%
\end{subequations}
where $\Sigma \left( i,i\right) $ denotes the $i$th diagonal entry of $%
\mathbf{\Sigma }$.

Correspondingly, the WSR problem in (\ref{Eq WSR_Constraint
Optimal EDA Precoder}) becomes
\begin{subequations}
\label{Eq Optimization ULRegion_Both}
\begin{equation}
\max_{\mathbf{\Sigma },\text{ }\gamma ,\mathbf{K}}\left\{ \alpha \widetilde{R%
}_{A,UL}^{EDA}+\left( 1-\alpha \right)
\widetilde{R}_{B,UL}^{EDA}\right\} \label{Eq Optimization
ULRegion}
\end{equation}%
subject to
\begin{equation}
\text{Tr}(\mathbf{G}\left( \gamma \right) \mathbf{K\Sigma }^{2}\mathbf{K}%
^{T})\leq P_{T}\text{ and }\mathbf{KK}^{T}=\mathbf{I.} \label{Eq
Optimization ULRegion Constraint}
\end{equation}%
\end{subequations}
For the above problem, if the optimal $\left( \mathbf{K},\mathbf{\Sigma }%
\right) $ couple for any given $\gamma $ can be found, the optimal
solution to (\ref{Eq Optimization ULRegion_Both}) can be easily
determined by a one-dimension full search over $\gamma $.

We next determine the optimal $\left( \mathbf{K},\mathbf{\Sigma
}\right) $ couple for an arbitrarily given $\gamma $. The optimal
unitary rotation matrix $\mathbf{K}$ to the problem in (\ref{Eq
Optimization ULRegion_Both}), for a given $\gamma $, is presented
in the following lemma (which is a direct result of Theorem
\ref{Theorem1}).
\begin{lemm}
Given $\gamma $, the optimal $\mathbf{K}$ to the maximum WSR problem in (\ref%
{Eq Optimization ULRegion_Both}) is $\mathbf{K}_{opt}^{(\gamma )}=\mathbf{U}%
_{G\left( \gamma \right) }$ given in (\ref{Eq Theorem 1}).
\end{lemm}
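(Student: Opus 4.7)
The plan is to reduce this lemma to Theorem \ref{Theorem1} by observing that, for fixed $\gamma$, the rotation matrix $\mathbf{K}$ enters the WSR problem (\ref{Eq Optimization ULRegion_Both}) only through the power constraint. Indeed, the per-stream rates in (\ref{Eq RateA UL}) and (\ref{Eq RateB UL}) depend only on $\gamma$ and the diagonal entries $\Sigma(i,i)$, so the objective $\alpha\widetilde{R}_{A,UL}^{EDA}+(1-\alpha)\widetilde{R}_{B,UL}^{EDA}$ is entirely independent of $\mathbf{K}$.

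First, I would verify this $\mathbf{K}$-independence of the objective by direct inspection of (\ref{Eq RateA UL})--(\ref{Eq RateB UL}), noting that the arguments of the logarithms involve only $\gamma$ and $\Sigma(i,i)$, so no hidden coupling with $\mathbf{K}$ is introduced by the $[\cdot]^{+}$ operator. Then, with $\gamma$ and $\mathbf{\Sigma}$ held fixed, I would recognize the power-minimization subproblem $\min_{\mathbf{K}\mathbf{K}^{T}=\mathbf{I}}\text{Tr}(\mathbf{G}(\gamma)\mathbf{K\Sigma}^{2}\mathbf{K}^{T})$ as exactly the problem (\ref{Eq_K_opt}) solved by Theorem \ref{Theorem1}, whose minimizer is $\mathbf{K}_{opt}^{(\gamma)}=\mathbf{U}_{G(\gamma)}$. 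At this point I would invoke Remark \ref{Corollary 1}, which states that this optimizer depends on $\gamma$ only; hence the same $\mathbf{K}_{opt}^{(\gamma)}$ is simultaneously power-optimal for every choice of $\mathbf{\Sigma}$.

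Finally, I would close the argument with a swap-in step. Let $(\mathbf{K}^{\star},\mathbf{\Sigma}^{\star})$ attain the maximum of (\ref{Eq Optimization ULRegion_Both}) for the given $\gamma$. Replacing $\mathbf{K}^{\star}$ by $\mathbf{K}_{opt}^{(\gamma)}$ while keeping $\mathbf{\Sigma}^{\star}$ unchanged leaves the objective unchanged (since it does not involve $\mathbf{K}$) and, by Theorem \ref{Theorem1}, weakly reduces the left-hand side of (\ref{Eq Optimization ULRegion Constraint}); hence feasibility is preserved and the swapped pair remains optimal. This establishes that $\mathbf{K}_{opt}^{(\gamma)}$ is an optimal unitary rotation matrix for any fixed $\gamma$.

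Given the transparent separation between objective and constraint, I do not anticipate any real obstacle. The lemma is essentially a corollary of Theorem \ref{Theorem1}, and the only step requiring any care is confirming that the objective is truly $\mathbf{K}$-free, so that the power-minimizing choice of $\mathbf{K}$ is automatically the WSR-maximizing one.
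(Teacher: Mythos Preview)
Your proposal is correct and matches the paper's approach: the paper simply states that this lemma is ``a direct result of Theorem~\ref{Theorem1},'' and you have spelled out exactly why---the objective in (\ref{Eq Optimization ULRegion_Both}) is $\mathbf{K}$-free, so Theorem~\ref{Theorem1} (together with Remark~\ref{Corollary 1}) lets you swap any optimal $\mathbf{K}^{\star}$ for $\mathbf{K}_{opt}^{(\gamma)}$ without affecting the objective or violating feasibility. No gap.
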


The remaining task is to find the optimal diagonal matrix $\mathbf{\Sigma }$%
. The optimization problem in (\ref{Eq Optimization ULRegion}) can
be equivalently written as
\begin{equation}
\max_{\mathbf{\Sigma }}\left\{ \frac{\alpha }{2}\dsum\limits_{i=1}^{n_{R}}%
\left[ \log \left( \frac{1}{1+\gamma ^{2}}+\Sigma \left(
i,i\right) ^{2}\right) \right] ^{+}+\frac{1-\alpha
}{2}\dsum\limits_{i=1}^{n_{R}}\left[ \log \left( \frac{\gamma
^{2}}{1+\gamma ^{2}}+\gamma ^{2}\Sigma \left( i,i\right)
^{2}\right) \right] ^{+}\right\} \label{Eq Optimization ULRegion
Plus}
\end{equation}%
subject to Tr$\left( \mathbf{\Lambda }_{G\left( \gamma \right) }\mathbf{%
\Sigma }^{2}\right) \leq P_{T}$ (cf., (\ref{Eq Theorem1 Power
Constraint})).

The objective function (\ref{Eq Optimization ULRegion Plus}) involves $%
[\cdot ]^{+}$ operations, and thus is not concave. However, if we
know in advance which $[\cdot ]^{+}$ operations should be
activated, (\ref{Eq Optimization ULRegion Plus}) can be converted
into a convex optimization problem. Consider any two index subsets
$S_{m}\subseteq $ $\{1,\ldots
,n_{R}\}$, $m\in \left\{ A,B\right\} $. We formulate the following problem%
\begin{equation}
\max_{\mathbf{\Sigma }}\left\{ \frac{\alpha }{2}\dsum\limits_{i\in S_{A}}%
\left[ \log \left( \frac{1}{1+\gamma ^{2}}+\Sigma \left(
i,i\right) ^{2}\right) \right] +\frac{1-\alpha
}{2}\dsum\limits_{i\in S_{B}}\left[ \log \left( \frac{\gamma
^{2}}{1+\gamma ^{2}}+\gamma ^{2}\Sigma \left( i,i\right)
^{2}\right) \right] \right\}   \label{Eq Optimization ULRegion
PlusPlus}
\end{equation}%
subject to Tr$\left( \mathbf{\Lambda }_{G\left( \gamma \right) }\mathbf{%
\Sigma }^{2}\right) \leq P_{T}$. The solution to the above problem
is given in the following lemma, with the proof given in Appendix
IV.

\begin{lemm}
\label{Lemma3}For given $S_{A}$, $S_{B}$ and $\gamma $, the
solution to the problem in (\ref{Eq Optimization ULRegion
PlusPlus}) is given by
\end{lemm}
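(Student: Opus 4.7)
The plan is to recognise that, once we fix the index sets $S_A$ and $S_B$, the optimisation in (\ref{Eq Optimization ULRegion PlusPlus}) is a concave maximisation subject to a single linear constraint, and so admits a closed-form water-filling solution via KKT conditions. First I would change variable to $\sigma_i \triangleq \Sigma(i,i)^2 \ge 0$ and write $\lambda_i$ for the $i$th diagonal entry of $\mathbf{\Lambda}_{G(\gamma)}$, so that the constraint becomes the linear inequality $\sum_{i=1}^{n_R} \lambda_i \sigma_i \le P_T$ and the objective becomes a sum of terms of the form $\frac{\alpha}{2}\log\bigl(\tfrac{1}{1+\gamma^2}+\sigma_i\bigr)$ (for $i\in S_A$) and $\frac{1-\alpha}{2}\log\bigl(\tfrac{\gamma^2}{1+\gamma^2}+\gamma^2\sigma_i\bigr)$ (for $i\in S_B$), each of which is concave in $\sigma_i$. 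Thus the problem is convex and Slater's condition holds, so KKT gives a necessary and sufficient characterisation.

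Next I would form the Lagrangian $\mathcal{L} = (\text{objective}) - \mu\bigl(\sum_i\lambda_i\sigma_i - P_T\bigr) + \sum_i\nu_i\sigma_i$ with $\mu\ge 0$ and $\nu_i\ge 0$, and read off the stationarity condition $\partial\mathcal{L}/\partial\sigma_i = 0$ for each $i$. The key observation is that for indices in $S_A\cap S_B$ the two log-derivatives share the same $\sigma_i$-dependence $(\tfrac{1}{1+\gamma^2}+\sigma_i)^{-1}$, so the derivative collapses to $\frac{1}{2}(\tfrac{1}{1+\gamma^2}+\sigma_i)^{-1}$, independent of $\alpha$. This lets me split into four regimes: $i\in S_A\cap S_B$, $i\in S_A\setminus S_B$, $i\in S_B\setminus S_A$, and $i\notin S_A\cup S_B$. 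In the first three, inverting stationarity and enforcing $\sigma_i\ge 0$ through complementary slackness produces a water-filling expression of the form
\begin{equation*}
\sigma_i^\star = \left[\frac{\beta_i}{2\mu\lambda_i} - \frac{1}{1+\gamma^2}\right]^+,
\end{equation*}
where $\beta_i = 1$, $\alpha$, or $1-\alpha$ depending on the regime; in the last regime $\sigma_i^\star = 0$ since $\sigma_i$ appears nowhere in the objective but costs power.

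Finally I would determine the Lagrange multiplier $\mu$ (equivalently the water level) from the power budget: $\mu$ is chosen as the unique non-negative scalar for which $\sum_i \lambda_i\sigma_i^\star(\mu) = P_T$, which is well-posed because the left-hand side is continuous and non-increasing in $\mu$. Writing the solution explicitly in the form above, and collecting the four cases into a single formula indexed by $S_A$, $S_B$ and $\gamma$, gives the statement of Lemma~\ref{Lemma3}. I expect the only subtle point to be the bookkeeping in the case $i\in S_A\cap S_B$, where the $\alpha$ weight cancels, so that the water level there is independent of $\alpha$ and depends only on $\lambda_i$ and $\gamma$; carefully stating this and handling the boundary $\sigma_i^\star=0$ via $[\,\cdot\,]^+$ is the main, though routine, obstacle.
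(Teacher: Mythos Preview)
Your proposal is correct and follows essentially the same approach as the paper: both recognise the problem as a concave maximisation under a linear power constraint, form the Lagrangian, and read off the four-case water-filling solution via KKT, with the key simplification that for $i\in S_A\cap S_B$ the two log-derivatives share the factor $\bigl(\tfrac{1}{1+\gamma^2}+\sigma_i\bigr)^{-1}$ so the $\alpha$ weight cancels. Your write-up is slightly more thorough (explicit non-negativity multipliers, Slater's condition, monotonicity of the water-level map), but the argument is the same.
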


\begin{equation}
\Sigma _{opt}^{\left( S_{A},S_{B},\gamma \right) }(i,i)=\left\{
\begin{array}{cc}
\sqrt{\left( \frac{1}{2\lambda \Lambda _{G\left( \gamma \right) }(i,i)}-%
\frac{1}{1+\gamma ^{2}}\right) ^{+}} & \text{if }i\in S_{A}\text{
and }i\in
S_{B} \\
\sqrt{\left( \frac{\alpha }{2\lambda \Lambda _{G\left( \gamma \right) }(i,i)}%
-\frac{1}{1+\gamma ^{2}}\right) ^{+}} & \text{if }i\in S_{A}\text{ and }%
i\notin S_{B} \\
\sqrt{\left( \frac{1-\alpha }{2\lambda \Lambda _{G\left( \gamma
\right) }(i,i)}-\frac{1}{1+\gamma ^{2}}\right) ^{+}} & \text{if
}i\notin S_{A}\text{
and }i\in S_{B} \\
0 & \text{if }i\notin S_{A}\text{ and }i\notin S_{B}%
\end{array}%
\right.   \label{Eq Lemma3}
\end{equation}%
where $\lambda $ is a real scalar satisfying%
\begin{equation}
\dsum\limits_{i=1}^{n_{R}}\Lambda _{G\left( \gamma \right)
}(i,i)\left( \Sigma _{opt}^{\left( S_{A},S_{B},\gamma \right)
}(i,i)\right) ^{2}=P_{T}. \label{Eq Lemma3 Constraint}
\end{equation}

Lemma \ref{Lemma3} yields the optimal power matrix $\mathbf{\Sigma }%
_{opt}^{\left( S_{A},S_{B},\gamma \right) }$ for given $S_{A}$ and
$S_{B}$. The optimal power matrix $\mathbf{\Sigma }_{opt}^{\left(
\gamma \right) }$ can be found by evaluating $\mathbf{\Sigma
}_{opt}^{\left( S_{A},S_{B},\gamma \right) }$ for all possible
$\{S_{A},S_{B}\}$.

We now conclude the solution to (\ref{Eq Optimization
ULRegion_Both}) in the following theorem.

\begin{theo}
\label{Theorem3}For any given $\gamma $ , the optimal ($\mathbf{K}$, $%
\mathbf{\Sigma })$ to the problem in (\ref{Eq Optimization
ULRegion_Both})
is given by%
\begin{equation*}
\mathbf{K=K}_{opt}^{(\gamma )}\text{ and }\mathbf{\Sigma }=\mathbf{\Sigma }%
_{opt}^{\left( \gamma \right) }
\end{equation*}%
where $\mathbf{\Sigma }_{opt}^{\left( \gamma \right) }$ is the optimal $%
\mathbf{\Sigma }_{opt}^{\left( S_{A},S_{B},\gamma \right) }$ over
all
possible \{$S_{A},S_{B}$\}, and $\mathbf{K}_{opt}^{(\gamma )}$ is given in (%
\ref{Eq Theorem 1}).
\end{theo}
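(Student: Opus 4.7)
The plan is to decompose the joint optimization over $(\mathbf{K},\mathbf{\Sigma})$ into two sequential sub-problems by exploiting one key structural observation: the objective in (\ref{Eq Optimization ULRegion}) depends only on the diagonal entries of $\mathbf{\Sigma}$ and on $\gamma$, while $\mathbf{K}$ enters the problem only through the power constraint $\text{Tr}(\mathbf{G}(\gamma)\mathbf{K}\mathbf{\Sigma}^{2}\mathbf{K}^{T})\leq P_{T}$. Hence, for any fixed $\mathbf{\Sigma}$ and $\gamma$, the best use of $\mathbf{K}$ is to minimize the left-hand side of the constraint, thereby enlarging the feasible set for $\mathbf{\Sigma}$ without altering the objective value.

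First, I would fix $\mathbf{\Sigma}$ and $\gamma$ and minimize $\text{Tr}(\mathbf{G}(\gamma)\mathbf{K}\mathbf{\Sigma}^{2}\mathbf{K}^{T})$ over all unitary $\mathbf{K}$. This is precisely the problem solved by Theorem~\ref{Theorem1}, whose minimizer is $\mathbf{K}_{opt}^{(\gamma)}=\mathbf{U}_{G(\gamma)}$ and is, crucially, independent of $\mathbf{\Sigma}$ (cf.\ Remark~\ref{Corollary 1}). Substituting $\mathbf{K}=\mathbf{K}_{opt}^{(\gamma)}$ back into the constraint collapses it to the diagonalized form $\text{Tr}(\mathbf{\Lambda}_{G(\gamma)}\mathbf{\Sigma}^{2})\leq P_{T}$ recorded in (\ref{Eq Theorem1 Power Constraint}). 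Thus, without loss of optimality, the original joint problem reduces to a pure power-allocation problem in $\mathbf{\Sigma}$.

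Second, with $\mathbf{K}=\mathbf{K}_{opt}^{(\gamma)}$ frozen, the residual problem is exactly (\ref{Eq Optimization ULRegion Plus}) over $\mathbf{\Sigma}$ subject to the diagonalized trace constraint. The obstacle here is the non-concavity induced by the $[\cdot]^{+}$ operators. I would circumvent this by enumerating the activation pattern: for each candidate pair $(S_{A},S_{B})$ of index subsets of $\{1,\ldots,n_{R}\}$, the truncated objective (\ref{Eq Optimization ULRegion PlusPlus}) is concave, the constraint is linear, and Lemma~\ref{Lemma3} provides the closed-form solution $\mathbf{\Sigma}_{opt}^{(S_{A},S_{B},\gamma)}$. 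Taking the maximum of the objective value over all such $(S_{A},S_{B})$ and retaining the corresponding $\mathbf{\Sigma}$ yields $\mathbf{\Sigma}_{opt}^{(\gamma)}$.

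The main obstacle is to justify that this enumeration is exhaustive for the original $[\cdot]^{+}$-problem. Concretely, one must argue that at any optimum of (\ref{Eq Optimization ULRegion Plus}) there is a well-defined pair $(S_{A}^{\star},S_{B}^{\star})$ identifying which log-terms are strictly positive; over that pair, (\ref{Eq Optimization ULRegion Plus}) coincides with (\ref{Eq Optimization ULRegion PlusPlus}), so Lemma~\ref{Lemma3} is applicable and its solution is admissible for (\ref{Eq Optimization ULRegion Plus}). Since the enumeration traverses all $2^{2n_{R}}$ candidates, the true $(S_{A}^{\star},S_{B}^{\star})$ is visited, and the best outcome cannot be improved. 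Concatenating the two stages then delivers the theorem: the globally optimal $\mathbf{K}$ is $\mathbf{K}_{opt}^{(\gamma)}$ from Theorem~\ref{Theorem1}, and the globally optimal $\mathbf{\Sigma}$ is $\mathbf{\Sigma}_{opt}^{(\gamma)}$ constructed via Lemma~\ref{Lemma3}.
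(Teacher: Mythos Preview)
Your proposal is correct and follows essentially the same approach as the paper: the paper's proof simply states that the result ``follows directly from Lemma~2 and Lemma~\ref{Lemma3},'' and your two-stage decomposition (first invoking Theorem~\ref{Theorem1}/Lemma~2 to fix $\mathbf{K}=\mathbf{K}_{opt}^{(\gamma)}$ independently of $\mathbf{\Sigma}$, then enumerating $(S_A,S_B)$ and applying Lemma~\ref{Lemma3}) is exactly the content of that citation, spelled out in full. Your explicit justification of why the $(S_A,S_B)$ enumeration is exhaustive is more careful than what the paper provides, but the underlying argument is the same.
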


\begin{proof}
This follows directly from Lemma 2 and Lemma \ref{Lemma3}$.$
\end{proof}

To solve (\ref{Eq Optimization ULRegion Plus}) more efficiently,
we may
confine $S_{A}=S_{B}$. Then, the solution is given by%
\begin{equation}
\widetilde{\Sigma }_{opt}^{\left( \gamma \right) }(i,i)=\sqrt{\left( \frac{1%
}{2\lambda \Lambda _{G\left( \gamma \right) }(i,i)}-\frac{1}{1+\gamma ^{2}}%
\right) ^{+}}.  \label{Eq Lemma3 Revised}
\end{equation}%
It is observed from numerical results that the extra constraint of $%
S_{A}=S_{B}$ incurs un-noticeable performance loss. Finally, we
perform a one-dimension full search over $\gamma$ which yields the
approximate solution. This algorithm is summarized as the
approximate solution I below.

\texttt{\ \ \ \ \ \ Approximate Solution I} \texttt{%
\begin{align*}
& \text{for }\gamma =0\text{ to }1 \ \text{and} \ 1/\gamma
=1\text{ to }0 \text{, with a step }\delta \ \ \ \
\ \ \ \ \ \ \ \ \ \ \ \ \ \ \ \ \ \ \  \\
& \ \ \ \ \ \text{compute }\mathbf{\Lambda }_{G\left( \gamma \right) }\
\text{using (\ref{Eq Definition of G})} \\
& \ \ \ \ \ \ \ \ \ \ \text{{compute}}{\ }\widetilde{{\mathbf{\Sigma }}}{%
_{opt}^{\left( \gamma \right) }\ }\text{{using (\ref{Eq Lemma3 Revised})}} \\
& \ \ \ \ \ \ \ \ \ \ \text{{compute}}{\ \widetilde{R}_{A,UL}^{EDA}\ }\text{{%
and}}{\ \widetilde{R}_{B,UL}^{EDA}\ }\text{{\ in(\ref{Eq RateA UL}) and (\ref%
{Eq RateB UL})}} \\
& \ \ \ \ \ \ \ \ \ \ \text{{backup the corresponding WSR}\newline
} \\
& \text{end} \\
& \text{find the highest WSR in the backup}
\end{align*}%
}

\subsection{Approximate Solution II}

The approximate solution I (AS-I) relies on two constraints: 1) $\mathbf{KK}%
^{T}=\mathbf{I}$ and 2) $\mathbf{\Psi }_{A}=\mathbf{\Sigma }$,
$\mathbf{\Psi }_{B}=\gamma \mathbf{\Sigma }$. We next relax these
constraints to improve the AS-I.

We start with the first constraint. Recall the optimization problem in (\ref%
{Eq_K_opt}). We may ask what is the optimal rotation matrix
$\mathbf{K}$
while relaxing the unitary matrix constraint. This problem is formulated as%
\footnote{%
The constraint (\ref{Eq 2}) implies that, for any given
$\mathbf{\Sigma }$, the achievable rate pair of the EDA-PNC scheme
is fixed.}
\begin{subequations}
\label{Eq 1_2}
\begin{equation}
\min_{\mathbf{K}}\text{Tr}(\mathbf{G}\left( \gamma \right) \mathbf{K\Sigma }%
^{2}\mathbf{K}^{T})  \label{Eq 1}
\end{equation}%
s.t.
\begin{equation}
\left[ \mathbf{K}^{-1}\left( \mathbf{K}^{-1}\right) ^{T}\right] _{\text{diag}%
}=\mathbf{I}\text{.}  \label{Eq 2}
\end{equation}%
\end{subequations} We present a solution the this problem, with the
proof given in Appendix V.

\begin{lemm}
\label{MarkerLemmaApproxII}For $\mathbf{\Psi }_{A}=\mathbf{\Sigma }$ and $%
\mathbf{\Psi }_{B}=\gamma \mathbf{\Sigma }$ with $\mathbf{\Sigma
}$ given by (\ref{Eq Lemma3 Revised}), the optimal rotation matrix
$\mathbf{K}$ for the
problem in\ (\ref{Eq 1}) and (\ref{Eq 2}) is given by $\mathbf{K}%
_{opt}^{(\gamma )}$ in (\ref{Eq Theorem 1}).
\end{lemm}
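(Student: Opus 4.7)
The plan is to exploit the eigenstructure of $\mathbf{G}(\gamma)$ to recast the relaxed optimization in a rotated coordinate system, verify that $\mathbf{K}=\mathbf{U}_{G(\gamma)}$ solves the KKT stationarity conditions, and then close the gap to global optimality using a matching lower bound on the objective.

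First, I would reduce to a diagonal $\mathbf{G}$. Substituting $\mathbf{K}=\mathbf{U}_{G(\gamma)}\widetilde{\mathbf{K}}$ and using orthogonality of $\mathbf{U}_{G(\gamma)}$, one checks immediately that
\[
\mathbf{K}^{-1}(\mathbf{K}^{-1})^{T}=\widetilde{\mathbf{K}}^{-1}\mathbf{U}_{G(\gamma)}^{T}\mathbf{U}_{G(\gamma)}\widetilde{\mathbf{K}}^{-T}=\widetilde{\mathbf{K}}^{-1}\widetilde{\mathbf{K}}^{-T},
\]
so the constraint (\ref{Eq 2}) has the same form for $\widetilde{\mathbf{K}}$, while the objective reduces to $\text{Tr}(\mathbf{\Lambda}_{G(\gamma)}\widetilde{\mathbf{K}}\mathbf{\Sigma}^{2}\widetilde{\mathbf{K}}^{T})$. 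It therefore suffices to show that $\widetilde{\mathbf{K}}=\mathbf{I}$ is an optimal minimizer, whose objective value is $\text{Tr}(\mathbf{\Lambda}_{G(\gamma)}\mathbf{\Sigma}^{2})$.

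Second, I would set up the Lagrangian with a diagonal multiplier $\mathbf{D}$ for the $n_{R}$ scalar constraints and compute the stationarity condition via standard matrix calculus. Using $\partial\widetilde{\mathbf{K}}^{-1}=-\widetilde{\mathbf{K}}^{-1}(\partial\widetilde{\mathbf{K}})\widetilde{\mathbf{K}}^{-1}$, the first-order condition can be manipulated to
\[
\widetilde{\mathbf{K}}^{T}\mathbf{\Lambda}_{G(\gamma)}\widetilde{\mathbf{K}}\mathbf{\Sigma}^{2}\widetilde{\mathbf{K}}^{T}\widetilde{\mathbf{K}}=\mathbf{D}.
\]
At $\widetilde{\mathbf{K}}=\mathbf{I}$ the left-hand side becomes $\mathbf{\Lambda}_{G(\gamma)}\mathbf{\Sigma}^{2}$, which is diagonal, so $\widetilde{\mathbf{K}}=\mathbf{I}$ is indeed a KKT point of the relaxed problem.

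Third, to upgrade this stationary point to a global minimum I would exploit the waterfilling structure of $\mathbf{\Sigma}$ given in (\ref{Eq Lemma3 Revised}) together with Hadamard's inequality. Applied to the PSD matrix $\widetilde{\mathbf{K}}^{-1}\widetilde{\mathbf{K}}^{-T}=(\widetilde{\mathbf{K}}^{T}\widetilde{\mathbf{K}})^{-1}$ whose diagonal equals $\mathbf{I}$, Hadamard gives $\det(\widetilde{\mathbf{K}}^{T}\widetilde{\mathbf{K}})^{-1}\leq 1$, hence $|\det\widetilde{\mathbf{K}}|\geq 1$, and the pointwise inequality $[\mathbf{P}]_{ii}[\mathbf{P}^{-1}]_{ii}\geq 1$ (a direct consequence of Cauchy--Schwarz applied to $\mathbf{P}^{1/2}\mathbf{e}_{i}$ and $\mathbf{P}^{-1/2}\mathbf{e}_{i}$) applied to $\mathbf{P}=\widetilde{\mathbf{K}}^{T}\widetilde{\mathbf{K}}$ forces each column of $\widetilde{\mathbf{K}}$ to have norm at least one. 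Combining these column-norm bounds with the waterfilling identity $\Lambda_{G(\gamma)}(i,i)(\Sigma(i,i)^{2}+\frac{1}{1+\gamma^{2}})=\frac{1}{2\lambda}$ on the active set should then yield
\[
\text{Tr}(\mathbf{\Lambda}_{G(\gamma)}\widetilde{\mathbf{K}}\mathbf{\Sigma}^{2}\widetilde{\mathbf{K}}^{T})\geq\text{Tr}(\mathbf{\Lambda}_{G(\gamma)}\mathbf{\Sigma}^{2}),
\]
with equality at $\widetilde{\mathbf{K}}=\mathbf{I}$.

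The main technical obstacle is the last step. The relaxed feasible set is not convex in $\mathbf{K}$, so KKT alone is only necessary; I expect the argument to require a careful interplay between Hadamard's inequality, the column-norm bound, and the specific waterfilling profile of $\mathbf{\Sigma}$ (the generic $\mathbf{\Sigma}$ would give a weaker result). A backup plan would be to parametrize $\widetilde{\mathbf{K}}$ through its polar or LQ decomposition and directly bound the objective in the resulting coordinates, at the cost of a more computational argument.
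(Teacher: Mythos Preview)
Your reduction to diagonal $\mathbf{G}$ and the verification that $\widetilde{\mathbf{K}}=\mathbf{I}$ is a KKT point are both correct, but the global-optimality step is a genuine gap, and you essentially admit as much. The column-norm bound $\sum_i\widetilde{K}_{ij}^2\ge 1$ only controls the \emph{unweighted} column energy; in the objective $\sum_j\Sigma(j,j)^2\sum_i\Lambda_{G(\gamma)}(i,i)\widetilde{K}_{ij}^2$ each column is weighted by $\Lambda_{G(\gamma)}(i,i)$, and your bound collapses to the trivial estimate $\text{objective}\ge \Lambda_{G(\gamma)}(1,1)\,\text{Tr}(\mathbf{\Sigma}^2)$, which is not tight. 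The waterfilling identity you invoke lives on the diagonal $(i=j)$ pairing and says nothing about how mass can be shuffled across rows within a fixed column, so it does not close this hole. Since the feasible set is nonconvex, KKT stationarity is not enough, and neither Hadamard nor the $[\mathbf P]_{ii}[\mathbf P^{-1}]_{ii}\ge 1$ trick captures the needed cross-column coupling.

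The paper's proof takes a structurally different route that you should adopt. It writes $\mathbf{K}\mathbf{\Sigma}=\mathbf{U}\mathbf{D}\mathbf{V}^T$ (SVD), uses the trace inequality once to fix $\mathbf{U}=\mathbf{U}_{G(\gamma)}$, and then observes that the constraint $[\mathbf{V}\mathbf{D}^{-2}\mathbf{V}^T]_{\text{diag}}=\mathbf{\Sigma}^{-2}$ is, by the Schur--Horn theorem, equivalent to the majorization $\mathbf{D}^{-2}\prec\mathbf{\Sigma}^{-2}$. This converts the matrix problem into minimizing $\sum_i\Lambda_{G(\gamma)}(i,i)/d_i$ over vectors $(d_i)$ majorized by $(\sigma_i)$. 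The key lemma specific to the waterfilling $\mathbf{\Sigma}$ in (\ref{Eq Lemma3 Revised}) is that $\Lambda_{G(\gamma)}(i,i)/\sigma_i^2$ is monotone in $i$ (their Fact~2); this is exactly where the particular form of $\mathbf{\Sigma}$ enters, and it is what makes a two-point ``Robin Hood'' transfer argument (their Fact~3) go through to force $d_i=\sigma_i$. Your backup idea of a polar/LQ parametrization is heading in this direction, but the decisive ingredients you are missing are the majorization reformulation and the monotonicity property of the waterfilling profile.
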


The above lemma means that, given the power matrices $\mathbf{\Psi
}_{A}$ and $\mathbf{\Psi }_{B}$ obtained from AS-I, it is
impossible to find a more power-efficient $\mathbf{K}$ than the
unitary one given by (\ref{Eq Theorem 1}).

We next relax the second constraint. With $\mathbf{K}$ given by
(\ref{Eq
Theorem 1}), we optimize the power matrices $\mathbf{\Psi }_{A}$ and $%
\mathbf{\Psi }_{B}$ without confining to $\mathbf{\Psi }_{B}=\gamma \mathbf{%
\Psi }_{A}$. The corresponding WSR problem is written as
\begin{subequations}
\begin{equation}
\max_{\mathbf{\Psi }_{A},\text{ }\mathbf{\Psi }_{B}}\left\{ \alpha
R_{A,UL}^{EDA}+\left( 1-\alpha \right) R_{B,UL}^{EDA}\right\}
\label{Eq Objective}
\end{equation}%
subject to%
\begin{equation}
\text{Tr}\left( \mathbf{U}_{A}\mathbf{\Sigma }_{A}^{-2}\mathbf{U}_{A}^{T}%
\mathbf{K}_{opt}^{(\gamma )}\mathbf{\Psi }_{A}^{2}\left( \mathbf{K}%
_{opt}^{(\gamma )}\right) ^{T}+\mathbf{U}_{B}\mathbf{\Sigma }_{B}^{-2}%
\mathbf{U}_{B}^{T}\mathbf{K}_{opt}^{(\gamma )}\mathbf{\Psi
}_{B}^{2}\left( \mathbf{K}_{opt}^{(\gamma )}\right) ^{T}\right)
\leq P_{T}. \label{Eq Constraint}
\end{equation}%
\end{subequations}
The objective function in (\ref{Eq Objective}) is concave in $\mathbf{\Psi }%
_{A}$ and $\mathbf{\Psi }_{B}$ if we set the term $\frac{\mathbf{\Psi }%
_{A}^{2}}{\mathbf{\Psi }_{A}^{2}+\mathbf{\Psi }_{B}^{2}}$ of
$R_{A,UL}^{EDA}$
in (\ref{Eq RateA UL original}), and $\frac{\mathbf{\Psi }_{B}^{2}}{\mathbf{%
\Psi }_{A}^{2}+\mathbf{\Psi }_{B}^{2}}$ of $R_{B,UL}^{EDA}$ in
(\ref{Eq RateB UL original}), to be pre-determined constant
matrices $\mathbf{\theta } $ and $\mathbf{I}-\mathbf{\theta }$,
respectively. The solution can then be found by recursively
solving (\ref{Eq Objective}) by fixing $\mathbf{\theta }
$ (which is a convex optimization problem), and then updating $\mathbf{%
\theta }$ using the new solution of $\mathbf{\Psi }_{A}$ and $\mathbf{\Psi }%
_{B}$. The details are tedious and thus omitted here.

We summarize approximate solution II as follows:

\texttt{\ \ \ \ \ \ Approximate Solution II} \texttt{\
\begin{align*}
& \ \ \ \ \ \ \text{Given the }\gamma \text{ and
}\mathbf{K}_{opt}^{\left( \gamma \right) }\ \text{from Approxi.
Solution I}\ \ \ \ \ \ \ \ \ \ \ \ \ \
\ \ \ \ \ \  \\
& \ \ \ \ \ \ \ \ \ \ \text{solve the problem in (\ref{Eq Objective}) and (%
\ref{Eq Constraint})}
\end{align*}%
}
\section{Numerical Results}

In this section, we provide numerical results to evaluate the
performance of the proposed EDA-PNC scheme for MIMO TWRCs. In
simulation, we always assume
that the relay SNR and the average per-user SNR are identical, i.e., $%
SNR_{R}=$ $SNR$. The results presented below are obtained by
averaging over 1,000 channel realizations.

\subsection{Achievable Sum-Rates of MIMO TWRCs with $n_{T}\geq n_{R}=2$}

Here, we present the numerical results for real-valued MIMO TWRCs with $%
n_{T}=n_{R}=2$. The coefficients in the channel matrices are
independently drawn from $\mathcal{N}(0,1)$. The optimal rotation
matrix $\mathbf{K}$ and the optimal power matrices $\mathbf{\Psi
}_{A}$ and $\mathbf{\Psi }_{B}$ are found by utilizing the
exhaustive search method. The achievable sum-rate of the proposed
EDA-PNC scheme is plotted in Fig. \ref{Fig_SumRate_2by2}. The
sum-capacity UB of the MIMO TWRC, the achievable sum-rate UBs of
the ANC and DF-NC schemes, as well as the achievable sum-rate of
the naive EDA-PNC (with $\mathbf{K=I}$) scheme, are also included
for comparison. In the high SNR region, we observe that the gap
between the achievable sum-rate of the proposed EDA-PNC scheme and
the sum-capacity UB of the MIMO TWRC is very small, e.g., less
than 0.3 bit/Sec/Hz in spectral efficiency, or less than 0.4 dB in
power efficiency, at a SNR greater than 15 dB. We also see that
the proposed EDA-PNC scheme significantly outperforms the ANC,
DF-NC and the naive EDA scheme. Specifically, the ANC scheme
suffers from a significant power loss of about 3-4 dB compared
with the proposed EDA-PNC scheme. The DF-NC scheme suffers from a
severe multiplexing loss, as the slope of its performance curve is
nearly halved compared to the other schemes. In the low SNR
region, we observe that the DF-NC scheme almost achieves the
sum-capacity UB, and that the proposed EDA-PNC scheme is inferior
to the DF-NC scheme. This is due to the inherited disadvantage of
nested lattice codes in the low SNR region \cite{NamIT09}.

Next, we show the numerical result of the proposed EDA-PNC scheme
for MIMO
TWRCs with $n_{R}=2$ and $n_{T}=2,3,4$. In Fig. \ref%
{Fig_SumRate_nR2_Various_nT}, two performance curves of the
proposed EDA-PNC scheme are illustrated. One is based on the
exhaustive search method, and the other is based on the
approximate solution II (AS-II) method developed in Section VI.
The sum-capacity UBs of the MIMO TWRCs and the performance curves
of the DF-NC scheme are also plotted. In the medium-to-high SNR
region, the gap between the proposed EDA-PNC scheme and the
sum-capacity UB of the MIMO TWRC diminishes as $n_{T}$ increases.
This agrees well with the asymptotic optimality of EDA-PNC, as
stated in Theorem \ref{Theorem4}. We also see that there is a tiny
gap between the optimal EDA-PNC curve (obtained from the
exhaustive search) and the one based on AS-II. This implies that
the proposed AS-II algorithm is nearly optimal for $n_{R}$ $=2$.

\subsection{Achievable Rates of MIMO TWRCs with $n_{T}\geq n_{R}=4$}

Now, we consider complex-valued MIMO TWRCs with $n_{T}\geq
n_{R}=4$. The channel coefficients are now independently drawn
from $\mathcal{CN}$(0,1). In this case, the complexity of
exhaustive search in finding the the optimal EDA precoder is
prohibitively high. Thus, we confine our results to the
approximate solutions developed in Section VI.

\subsubsection{Achievable Sum-Rates}

In Fig. \ref{Fig_AchievableRate_MIMO_Nr4_VariousSchemes}, we plot
the achievable sum-rate of the proposed EDA-PNC scheme with
$n_{T}=n_{R}=4$. This figure also includes the performance curves
of the other schemes considered in Fig. \ref{Fig_SumRate_2by2}.
The only difference is that the AS-II algorithm is used in
plotting the performance curve of the proposed
EDA-PNC scheme. Comparing Fig. \ref%
{Fig_AchievableRate_MIMO_Nr4_VariousSchemes} with Fig. \ref{Fig_SumRate_2by2}%
, we see that the relative performance trends of these schemes are
quite similar, except that the gap between the proposed EDA-PNC
and the capacity UB is slightly larger (about 1.4 dB in power
efficiency in the high SNR region) in Fig.
\ref{Fig_AchievableRate_MIMO_Nr4_VariousSchemes}. We conjecture
that this performance degradation is mainly due to the
sub-optimality of AS-II. We will seek for the possibility of
improving AS-II in our future work.

In Fig. \ref{Fig_AchievableRate_MIMO_Nr4}, we further study the impact of $%
n_{T}$ on the achievable sum-rate in the case of $n_{R}$ $=4$.
Similar to Fig. \ref{Fig_SumRate_nR2_Various_nT}, we see that the
proposed EDA-PNC scheme asymptotically approaches the capacity UB
as $n_{T}$ increases. It is also worth mentioning that, for
$n_{T}=8$ and $n_{R}=4$, the proposed EDA-PNC scheme can increase
the spectral efficiency by more than 50\% relative to the DF-NC
scheme, at a practical SNR level (e.g., SNR=15 dB). In
addition, we compare the performance of AS-I and AS-II algorithms in Fig. %
\ref{Fig_AchievableRate_MIMO_Nr4}. We see that AS-II always
slightly outperforms AS-I. For this reason, we only include the
performance curves of AS-II in the other figures presented in this
paper.

\subsubsection{Achievable Rate-Regions}

We next show the achievable rate-region of the proposed EDA-PNC
scheme (based on AS-II). The results for the case of
$n_{T}=n_{R}=4$ is shown in Fig. \ref{Fig_RateRegion_nT4nR4}, at
SNR = 0, 10, 15, 25 dB. We also include the rate-regions of the
capacity UB, the DF-NC scheme and the naive EDA-PNC scheme.
Clearly, the proposed scheme achieves a significantly larger
rate-region relative to the DF-NC scheme and the naive EDA-PNC
scheme, at a medium-to-high SNR. For a SNR of 15 dB, the proposed
EDA-PNC scheme outperforms the DF-NC scheme, whereas the naive
EDA-PNC scheme is worse than the DF-NC scheme, for the entire
rate-region. Compared to the naive EDA precoding, the performance
gain achieved by the proposed EDA precoding is significant. For
low SNRs, e.g., SNR = 0 dB, the\ achievable rate-region of DF-NC
is very close to the capacity outer bound of the MIMO TWRC and is
better than that of the EDA-PNC scheme. This is in agreement with
the
observations in Figs. \ref{Fig_SumRate_2by2}-\ref%
{Fig_AchievableRate_MIMO_Nr4}.

Finally, in Fig. \ref{Fig_RateRegion_nT8nR4}, we plot the
achievable rate-region of the proposed EDA-PNC scheme with
$n_{T}=8$ and $n_{R}=4$. Comparing to Fig.
\ref{Fig_RateRegion_nT4nR4}, we observe that the gap between the
achievable rate-region of the proposed EDA-PNC scheme and the
capacity outer bound of MIMO TWRC becomes smaller for the entire
SNR range. This agrees well with Theorem 2.

In summary, the results shown in Fig. \ref{Fig_SumRate_2by2}-\ref%
{Fig_RateRegion_nT8nR4} clearly demonstrates the benefits of the
proposed EDA-PNC scheme for MIMO TWRCs.

\section{Conclusions}
In this paper, we proposed an EDA-PNC scheme to approach the
capacity of a MIMO TWRC. The proposed EDA precoder efficiently
creates $n_{R}$ aligned parallel channels for the two users, which
provides a platform to perform multi-stream PNC. In such a manner,
the benefits of PNC can now be exploited in a MIMO two-way relay
system. We derived an achievable rate of the proposed EDA-PNC
scheme and showed that, as $n_{T}/n_{R}$ increases (towards
infinity), the proposed EDA-PNC scheme approaches the capacity
upper bound of a MIMO TWRC. For a finite $n_{T}$, numerical
results demonstrated that there is only a marginal gap between the
achievable rate of the proposed scheme and the capacity upper
bound, and the proposed scheme clearly outperforms the existing
benchmark schemes. It is worth mentioning that the discussions in
this paper is limited to the situation of $n_{T}\geq n_{R}$. The
extension of this work to the case of $n_{T}<n_{R}$ requires a
dimension reduction method and is of interest for future work.

\section*{Appendix I Treatment for a Complex-Valued Model}
The results of this paper derived based on a real-valued system
model can be readily extended to the case of a complex-valued
model. The key observation is that every complex-valued system
model can be equivalently expressed in a real-valued form.

For example, suppose that the uplink channel model in (\ref%
{Eq_SystemModel_Uplink}) is complex-valued. It can be equivalently
expressed
in a real-valued form as%
\begin{equation}
\left[
\begin{array}{c}
\mathfrak{R}\left( Y_{R}\right) \\
\mathfrak{I}\left( Y_{R}\right)%
\end{array}%
\right] =\sum\limits_{m\in \left\{ A,B\right\} }\left[
\begin{array}{cc}
\mathfrak{R}\left( \mathbf{H}_{m,R}\right) & -\mathfrak{I}\left( \mathbf{H}%
_{m,R}\right) \\
\mathfrak{I}\left( \mathbf{H}_{m,R}\right) & \mathfrak{R}\left( \mathbf{H}%
_{m,R}\right)%
\end{array}%
\right] \left[
\begin{array}{c}
\mathfrak{R}\left( X_{m}\right) \\
\mathfrak{I}\left( X_{m}\right)%
\end{array}%
\right] +\left[
\begin{array}{c}
\mathfrak{R}\left( Z_{R}\right) \\
\mathfrak{I}\left( Z_{R}\right)%
\end{array}%
\right]
\end{equation}%
where $\mathfrak{R}\left( \mathbf{\cdot }\right) $ and
$\mathfrak{I}\left( \mathbf{\cdot }\right) $ denote the real part
and imaginary part of a complex-valued matrix (or a vector),
respectively.

It is noteworthy that the above relationship also applies to the
downlink channel model (\ref{Eq_SystemModel_Downlink}). In this
way, the results obtained for the real-valued system are directly
applicable to a complex-valued system.

\section*{Appendix II Proof of Theorem \protect\ref{Theorem2}}

Here we only provide a sketch of the proof. We refer the
interested readers to \cite{NamIT09} (cf., proof of Th.1 in
\cite{NamIT09}) for more details.

\subsection{Uplink Achievable Rate-Pair}

Recall from (\ref{Eq_Relay_Signal_Multiply_K Step2}) that the
$n_{R}$ aligned eigen-modes (sub-channels) created by EDA
precoding can be written
in an entry-by-entry form as%
\begin{equation}
y_{R,i}\left[ l\right] =\Psi _{A}\left( i,i\right) c_{A,i}\left[
l\right] +\Psi _{B}\left( i,i\right) c_{B,i}\left[ l\right]
+z_{R,i}\left[ l\right] ,l=1,\cdots ,n,i=1,\cdots ,n_{R}.
\end{equation}%
where $y_{n_{R},i}\left[ l\right] $ (or $z_{n_{R},i}\left[
l\right] $)
represents the $i$th entry of $\widetilde{Y}_{R}\left[ l\right] $ (or $%
\widetilde{Z}_{R}\left[ l\right] $).

\subsubsection{Encoding}

The construction of nested lattice codes for each sub-channel $i$
follows exactly from \cite{NamIT09}. Let $\mathcal{C}_{m,i}$,
$m\in \left\{
A,B\right\} $, be the codebook of user $m$ for the $i$th sub-channel, and 2$%
^{nR_{m,i}}$ be the size of $\mathcal{C}_{m,i}$. To deliver a
message in the $i$th sub-channel, user $m$ chooses a codeword
$W_{m,i}\in \mathcal{C}_{m,i}$ associated with the message. After
a random dithering and a module-lattice operation \cite{NamIT09},
a length-$n$ signal sequence
\begin{equation*}
\mathbf{c}_{m,i}=\left[ c_{m,i}\left[ 1\right] ,\cdots ,c_{m,i}\left[ n%
\right] \right] ,
\end{equation*}%
is generated which will be transmitted in the $i$th sub-channel.
The above encoding operation is performed for all $n_{R}$
sub-channels.

\subsubsection{Decoding (the bin-index) at the Relay}

Upon receiving $\left\{ \widetilde{Y}_{R}\left[ l\right] \right\} _{l=1}^{n}$%
, the relay computes the so-called \textquotedblleft
bin-index\textquotedblright\ $T_{i}$ instead of $W_{A,i}$ and
$W_{B,i}$ for each sub-channel $i$, $i=1,\cdots ,n_{R}$. (See the
definition of bin-index in \cite{NamIT09}). From Theorem 3 in
\cite{NamIT09}, the error probability
of recovering the bin-index $T_{i}$ at the relay is arbitrarily small as $%
n\rightarrow \infty $ if
\begin{subequations}
\begin{eqnarray}
R_{A,i} &\leq &\frac{1}{2}\left[ \log \left( \frac{\Psi _{A}\left(
i,i\right) ^{2}}{\Psi _{A}\left( i,i\right) ^{2}+\Psi _{B}\left(
i,i\right)
^{2}}+\Psi _{A}\left( i,i\right) ^{2}\right) \right] ^{+},\text{ } \\
R_{B,i} &\leq &\frac{1}{2}\left[ \log \left( \frac{\Psi _{B}\left(
i,i\right) ^{2}}{\Psi _{A}\left( i,i\right) ^{2}+\Psi _{B}\left(
i,i\right) ^{2}}+\Psi _{B}\left( i,i\right) ^{2}\right) \right]
^{+},
\end{eqnarray}%
\end{subequations}
where $i=1,\cdots ,n_{R}.$

Since the aligned sub-channels are orthogonal to each other, the rate-pair ($%
R_{A}$, $R_{B}$) with which all the bin-indices $\left\{
T_{i}\right\} _{i=1}^{n_{R}}$ can be recovered correctly is given
by (\ref{Eq RateA UL original}) and (\ref{Eq RateB UL original}).

\subsection{Downlink Achievable Rate-Pair}

\subsubsection{Relay's Encoding}

Define a \textquotedblleft super bin-index\textquotedblright\ as $T$ $%
\mathbf{\triangleq \lbrack }T_{1},T_{2},\cdots
,T_{n_{R}}\mathbf{]}$ and
assume that $T$ is recovered correctly by the relay. Also, assume that $%
R_{A}\geq R_{B}$\footnote{%
The derivation for the case of $R_{A}<R_{B}$ will be similar.}. We generate 2%
$^{nR_{A}}$ $n_{R}$-by-$n$ codeword matrices with each column
drawn independently from a multi-variant Gaussian distribution
with zero mean and
covariance $\mathbf{Q}_{R}$. This forms a rate-$R_{A}$ codebook $\mathcal{C}%
_{R}$ (whose generation is independent of the codebooks $\left\{ \mathcal{C}%
_{m,1},\cdots ,\mathcal{C}_{m,n_{R}}\right\} $ used in the uplink
phase). The codebook $\mathcal{C}_{R}$ is employed to map each
super bin-index $T$ into a codeword in $\mathcal{C}_{R}$. Denote
by $\mathbf{X}_{R}\left(
T\right) $ the codeword in $\mathcal{C}_{R}$ mapped to $T$. Then, $\mathbf{X}%
_{R}\left( T\right) $ is transmitted over the $n_{R}$ antennas at
the relay.

\subsubsection{Decoding of the Two Users}

Upon receiving $\mathbf{Y}_{A}$, user $A$ decodes $T_{A}$, by finding in $%
\mathcal{C}_{A}^{DL}$ a codeword that is jointly typical with $\mathbf{Y}%
_{A} $. Here, $\mathcal{C}_{A}^{DL}$ is constructed by selecting
the codewords in $\mathcal{C}_{R}$ corresponding to $\left[
W_{A,1},\cdots ,W_{A,n_{R}}\right] $ (which are perfectly known to
user $A)$. Note that the cardinality of $\mathcal{C}_{A}^{DL}$ is
$2^{nR_{B}}$ \cite{NamIT09}. From
the argument of random coding and jointly typical decoding \cite%
{CoverTextBook}, we have $\Pr \left[ T_{A}\neq T\right] \rightarrow 0$ as $%
n\rightarrow \infty $ if
\begin{subequations}
\begin{equation}
R_{B}\leq R_{B,DL}^{EDA}=\frac{1}{2}\log \det \left( \mathbf{I}+\mathbf{H}%
_{R,A}\mathbf{Q}_{R}\mathbf{H}_{R,A}^{T}\right) .  \label{Eq R_B
EDA DL}
\end{equation}%
With $T=\left[ T_{1},\cdots ,T_{n_{R}}\right] $ and $\left[
W_{A,1},\cdots
,W_{A,n_{R}}\right] $, user $A$ can uniquely determine the messages of user $%
B$ using the method described in \cite{NamIT09}.

Similarly, user $B$ can reliably determine the messages of user A if%
\begin{equation}
R_{A}\leq R_{A,DL}^{EDA}=\frac{1}{2}\log \det \left( \mathbf{I}+\mathbf{H}%
_{R,B}\mathbf{Q}_{R}\mathbf{H}_{R,B}^{T}\right) .  \label{Eq R_A
EDA DL}
\end{equation}%
\end{subequations}
Combining (\ref{Eq RateA UL original}), (\ref{Eq RateB UL original}), (\ref%
{Eq R_B EDA DL}) and (\ref{Eq R_A EDA DL}), we complete the proof of Theorem %
\ref{Theorem2}.

\section*{Appendix III Proof of Theorem \protect\ref{Theorem4}}

\begin{proof}[Proof of Theorem \protect\ref{Theorem4}]
Since the downlink rate-pair of the EDA-PNC scheme is identical to
that of the capacity UB, we only need to consider the uplink
rate-pair.
Specifically, we need to show that%
\begin{equation}
R_{m,UL}^{UB}-R_{m,UL}^{EDA}\overset{\text{P}}{\rightarrow }0\text{ for }%
n_{T}\rightarrow \infty \text{, }m\in \left\{ A,B\right\} \text{.}
\label{Eq AsymptoticallyNoGap}
\end{equation}

Clearly, $R_{m,UL}^{UB}$ and $R_{m,UL}^{EDA}$ are continuous functions of $%
\mathbf{H}_{m,R}\mathbf{H}_{m,R}^{T}$. From the property of
convergence in probability (cf., Theorem 4, pp. 261 of
\cite{Rohatgi}), to prove (\ref{Eq
AsymptoticallyNoGap}), it suffices to show that, if%
\begin{equation}
\frac{1}{n_{T}}\mathbf{H}_{m,R}\mathbf{H}_{m,R}^{T}\rightarrow \mathbf{I}%
\text{, as }n_{T}\rightarrow \mathbf{\infty ,}  \label{A2}
\end{equation}%
then
\begin{equation}
R_{m,UL}^{UB}-R_{m,UL}^{EDA}\rightarrow 0\text{, for
}n_{T}\rightarrow \infty .  \label{A3}
\end{equation}%
From (\ref{Eq_RateA_UB_MIMO}) and (\ref{Eq_RateB_UB_MIMO}), we
have
\begin{equation}
R_{m,UL}^{UB}=\frac{1}{2}\log \det \left( \mathbf{I}+\mathbf{H}_{m,R}\mathbf{%
Q}_{m}\mathbf{H}_{m,R}^{T}\right) .  \label{A4}
\end{equation}%
Let $P_{m,T}$ be the power allocated to user $m$, $m\in \left\{ A,B\right\} $%
. With (\ref{A2}), it can be shown that as $n_{T}\rightarrow
\mathbf{\infty } $, the optimal $\mathbf{Q}_{m}$ takes the form of
\begin{equation}
\mathbf{Q}_{m}=\frac{P_{m,T}}{n_{R}n_{T}}\mathbf{H}_{m,R}^{T}\mathbf{H}%
_{m,R}.  \label{A5}
\end{equation}%
Thus, as $n_{T}\rightarrow \mathbf{\infty }$, we obtain
\begin{subequations}
\begin{eqnarray}
R_{m,UL}^{UB} &=&\frac{1}{2}\log \det \left( \mathbf{I+}\frac{P_{m,T}}{%
n_{R}n_{T}}\mathbf{H}_{m,R}\mathbf{H}_{m,R}^{T}\mathbf{H}_{m,R}\mathbf{H}%
_{m,R}^{T}\right)  \label{Step a} \\
&=&\frac{1}{2}n_{R}\log n_{T}+\frac{1}{2}\log \det \left[ \frac{1}{n_{T}}%
\mathbf{I+}\frac{P_{m,T}}{n_{R}}\left( \frac{1}{n_{T}}\mathbf{H}_{m,R}%
\mathbf{H}_{m,R}^{T}\right) \left( \frac{1}{n_{T}}\mathbf{H}_{m,R}\mathbf{H}%
_{m,R}^{T}\right) \right]  \notag \\
&=&\frac{1}{2}n_{R}\log n_{T}+\frac{1}{2}\log \det \left[ \frac{P_{m,T}}{%
n_{R}}\mathbf{I}\right] +o\left( 1\right)  \label{Step b} \\
&=&\frac{n_{R}}{2}\log \left( \frac{n_{T}P_{m,T}}{n_{R}}\right)
+o\left( 1\right) ,  \label{A6}
\end{eqnarray}%
\end{subequations}
where (\ref{Step a}) follows by substituting (\ref{A5}) into
(\ref{A4}), and (\ref{Step b}) follows from (\ref{A2}).

Now, consider the EDA precoder
\begin{equation}
\mathbf{F}_{m}=\mathbf{H}_{m,R}^{T}\left( \mathbf{H}_{m,R}\mathbf{H}%
_{m,R}^{T}\right) ^{-1}\mathbf{K\Psi }_{m}\,\text{, }m\in \left\{
A,B\right\} \text{.}  \label{A7}
\end{equation}%
We choose
\begin{equation}
\mathbf{K=I}\text{ and }\mathbf{\Psi }_{m}=\sqrt{\frac{n_{T}P_{m,T}}{n_{R}}}%
\mathbf{I}.  \label{A8}
\end{equation}%
The power constraint is asymptotically met, i.e.,
\begin{equation}
\text{Tr}\left\{ \mathbf{F}_{m}\mathbf{F}_{m}^{T}\right\}
=\text{Tr}\left\{
\left( \mathbf{H}_{m,R}\mathbf{H}_{m,R}^{T}\right) ^{-1}\frac{n_{T}P_{m,T}}{%
n_{R}}\right\} \overset{\text{P}}{\rightarrow }P_{m,T}\text{, for }%
n_{T}\rightarrow \infty .  \label{A9}
\end{equation}%
The choice of $\mathbf{K}$ and $\mathbf{\Psi }_{m}$ in (\ref{A8})
is in no sense optimal. However, we will show that this suboptimal
choice is sufficient to prove (\ref{A3}). To see this, the uplink
achievable rate of
the EDA-PNC scheme is given by 
\begin{eqnarray}
R_{m,UL}^{EDA} &=&\frac{1}{2}\sum_{i=1}^{n_{R}}\left[ \log \left(
\frac{\Psi _{m}^{2}\left( i,i\right) }{\Psi _{A}^{2}\left(
i,i\right) +\Psi _{B}^{2}\left( i,i\right) }+\Psi _{m}^{2}\left(
i,i\right) \right) \right]
^{+}  \label{A10a} \\
&=&\frac{n_{R}}{2}\log \left( \frac{n_{T}P_{m,T}}{n_{R}}\right)
+o(1)\text{, as }n_{T}\rightarrow \infty ,  \label{A10b}
\end{eqnarray}%
where (\ref{A10a}) follows from Theorem 2 and (\ref{A10b}) is from (\ref{A8}%
) together with the fact that $\frac{\Psi _{m}^{2}\left(
i,i\right) }{\Psi
_{A}^{2}\left( i,i\right) +\Psi _{B}^{2}\left( i,i\right) }\leq 1$, for $%
i=1,\cdots ,n_{R}$.

Combining (\ref{A6}) and (\ref{A10b}), we arrive at (\ref{A3}),
which completes the proof of Theorem \ref{Theorem4}.
\end{proof}

\section*{Appendix IV Proof of Lemma \protect\ref{Lemma3}}

\begin{proof}[Proof of Lemma \protect\ref{Lemma3}]
The objective function (\ref{Eq Optimization ULRegion PlusPlus})
is jointly concave in $\left\{ \Sigma \left( 1,1\right) ,\cdots
,\Sigma \left( n_{R},n_{R}\right) \right\} $. The Lagrangian of
problem (\ref{Eq Optimization ULRegion PlusPlus}) is given by
\begin{eqnarray}
&&L\left( \lambda ,\Sigma \left( 1,1\right) ,\cdots ,\Sigma \left(
n_{R},n_{R}\right) \right)   \notag \\
&=&\frac{\alpha }{2}\dsum\limits_{i\in S_{A}}\left[ \log \left( \frac{1}{%
1+\gamma ^{2}}+\Sigma \left( i,i\right) ^{2}\right) \right]
+\frac{1-\alpha }{2}\dsum\limits_{i\in S_{B}}\left[ \log \left(
\frac{\gamma ^{2}}{1+\gamma
^{2}}+\gamma ^{2}\Sigma \left( i,i\right) ^{2}\right) \right]   \notag \\
&&-\lambda \dsum\limits_{i=1}^{k}\Lambda _{G\left( \gamma \right)
}\left( i,i\right) \Sigma \left( i,i\right) ^{2}  \label{Eq
Lagrangian}
\end{eqnarray}%
where $\lambda $ is a non-negative scalar. The partial derivative
of the Lagrangian in (\ref{Eq Lagrangian}) with respect to each
$\Sigma \left( i,i\right) ^{2}$ is given by
\begin{equation}
\frac{\partial L}{\partial \left( \Sigma \left( i,i\right) ^{2}\right) }%
=\left\{
\begin{array}{cc}
\frac{1}{2}\frac{1}{\frac{1}{1+\gamma ^{2}}+\Sigma \left( i,i\right) ^{2}}%
-\lambda \Lambda _{G\left( \gamma \right) }\left( i,i\right) , & \text{if }%
i\in S_{A}\text{ and }i\in S_{B} \\
\frac{1}{2}\frac{\alpha }{\frac{1}{1+\gamma ^{2}}+\Sigma \left(
i,i\right)
^{2}}-\lambda \Lambda _{G\left( \gamma \right) }\left( i,i\right) , & \text{%
if }i\in S_{A}\text{ and }i\notin S_{B} \\
\frac{1}{2}\frac{1-\alpha }{\frac{1}{1+\gamma ^{2}}+\Sigma \left(
i,i\right)
^{2}}-\lambda \Lambda _{G\left( \gamma \right) }\left( i,i\right) , & \text{%
if }i\notin S_{A}\text{ and }i\in S_{B} \\
0, & \text{if }i\notin S_{A}\text{ and }i\notin S_{B}%
\end{array}%
\right.   \notag
\end{equation}%
where $i=1,..,n_{R}.$ The Karush-Kuhn-Tucker condition is%
\begin{equation}
\frac{\partial L}{\partial \left( \Sigma \left( i,i\right) ^{2}\right) }%
\left\{
\begin{array}{c}
=0,\text{ if }\Sigma \left( i,i\right) >0 \\
\leq 0,\text{ if }\Sigma \left( i,i\right) =0%
\end{array}%
\right.
\end{equation}%
which yields the solution (\ref{Eq Lemma3}).
\end{proof}

\section*{Appendix V Proof of Lemma \protect\ref{MarkerLemmaApproxII}}

\ Without loss of generality, let the SVD of $\mathbf{K\Sigma }$ be%
\begin{equation}
\mathbf{K\Sigma }=\mathbf{UDV}^{T}.  \label{Eq AppendII 2}
\end{equation}%
where the diagonal elements of $\mathbf{\Sigma }$ and $\mathbf{D}$
are both
arranged in the descending order. From (\ref{Eq AppendII 2}), the rank of $%
\mathbf{D}$ is the same as $\mathbf{\Sigma }$ (as $\mathbf{K}$, $\mathbf{U}$%
, and $\mathbf{V}$ are all of full rank). This implies that, if
$\Sigma \left( i,i\right) $ $=0$ for any index $i$, then $D\left(
i,i\right) $ $=0$.

Let us first consider that $\mathbf{\Sigma }$ has full rank. We
will relax this constraint later. Using
(\ref{Eq_Trace_inequlity}), we obtain
\begin{equation*}
\text{Tr}(\mathbf{G}(\gamma
)\mathbf{UD}^{2}\mathbf{U}^{T})=\text{Tr}\left(
\mathbf{U}_{G\left( \gamma \right) }\mathbf{\Lambda }_{G\left(
\gamma
\right) }\mathbf{U}_{G\left( \gamma \right) }^{T}\mathbf{UD}^{2}\mathbf{U}%
^{T}\right) \geq \text{Tr}\left( \mathbf{\Lambda }_{G\left( \gamma
\right) }\cdot \mathbf{D}^{2}\right)
\end{equation*}%
where the diagonal entries of $\mathbf{\Lambda }_{G\left( \gamma
\right) }$
are arranged in the ascending order, and the equality holds when $\mathbf{U}=%
\mathbf{U}_{G\left( \gamma \right) }$.

Then, the optimization problem in (\ref{Eq 1}) and (\ref{Eq 2})
can be
expressed as%
\begin{subequations}
\begin{equation}
\min_{\mathbf{D},\mathbf{V}}\text{Tr}\left( \mathbf{\Lambda
}_{G\left( \gamma \right) }\cdot \mathbf{D}^{2}\right)   \label{Eq
AppendII 3}
\end{equation}%
s.t.%
\begin{equation}
\left[ \mathbf{VD}^{-2}\mathbf{V}^{T}\right] _{\text{diag}}=\mathbf{\Sigma }%
^{-2}.  \label{Eq AppendII 4}
\end{equation}%
\end{subequations}
Note that the diagonal elements of $\mathbf{\Sigma }^{-2}$ and $\mathbf{D}%
^{-2}$ are both arranged in the ascending order. Denote the
diagonal entries
of $\mathbf{\Sigma }^{-2}$ by $\left[ \sigma _{1},\cdots \sigma _{n_{R}}%
\right] $ and those of $\mathbf{D}^{-2}$ as $\left[ d_{1},\cdots ,d_{n_{R}}%
\right] $. From Th. 4.3.32 of \cite{HornTextbook}, for any $\left[
d_{1},\cdots ,d_{n_{R}}\right] $ majorized by $\left[ \sigma
_{1},\cdots \sigma _{n_{R}}\right] $, there always exists a
unitary matrix $\mathbf{V}$ satisfying (\ref{Eq AppendII 4}). \
Therefore, the optimization problem
specified in (\ref{Eq AppendII 3}) and (\ref{Eq AppendII 4}) becomes%
\begin{equation}
\min_{d_{1},\cdots ,d_{n_{R}}}\sum_{i=1}^{n_{R}}\frac{\Lambda
_{G\left( \gamma \right) }\left( i,i\right) }{d_{i}}  \label{Eq
AppendII 5}
\end{equation}%
subject to the majorization constraint as \cite{HornTextbook}%
\begin{eqnarray}
d_{i} &\geq &0\text{ }\forall \text{ }i\in \left\{ 1,\cdots
,n_{R}\right\}
\label{Eq AppendII 6} \\
d_{1} &\leq &d_{2}\leq \cdots \leq d_{n_{R}}  \notag \\
\sum_{i=1}^{t}d_{i} &\leq &\sum_{i=1}^{t}\sigma _{i},t=1,\cdots
,n_{R}-1,
\notag \\
\sum_{i=1}^{n_{R}}d_{i} &=&\sum_{i=1}^{n_{R}}\sigma _{i}.  \notag
\end{eqnarray}

We next show that, with $\mathbf{\Sigma }$ given by (\ref{Eq Lemma3 Revised}%
), the solution to the optimization problem (\ref{Eq 1}) (\ref{Eq
2}) is
given by%
\begin{equation}
d_{i}=\sigma _{i},\forall \text{ }i=1,...,n_{R}.  \label{Eq 7}
\end{equation}

To prove (\ref{Eq 7}), we need some facts, as detailed below.

\begin{fact}
\label{MarkerProp1}For any $i,j\in \left\{ 1,\cdots ,n_{R}\right\} $ with $%
i<j$, we have%
\begin{equation}
\frac{\Lambda _{G}\left( i,i\right) }{\sigma _{i}^{2}}\geq
\frac{\Lambda _{G}\left( j,j\right) }{\sigma _{j}^{2}}.  \label{Eq
4}
\end{equation}
\end{fact}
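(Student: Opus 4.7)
The plan is to prove Fact~\ref{MarkerProp1} by plugging the explicit water-filling expression for $\Sigma(i,i)$ given in (\ref{Eq Lemma3 Revised}) into the ratio $\Lambda_{G(\gamma)}(i,i)/\sigma_i^2$ and showing that the resulting quantity is a monotonically decreasing function of $\Lambda_{G(\gamma)}(i,i)$. Since $\sigma_i$ denotes the $i$th diagonal entry of $\mathbf{\Sigma}^{-2}$, on the set of indices where $\Sigma(i,i)>0$ we have $\sigma_i^{-2}=\Sigma(i,i)^4$, so the claim reduces to showing that the function
\begin{equation*}
f(a) \;=\; a\,\Bigl(\tfrac{1}{2\lambda a}-\tfrac{1}{1+\gamma^2}\Bigr)^{2}
\;=\;\frac{(1-2\lambda c\,a)^2}{4\lambda^{2}a},
\qquad c\triangleq\tfrac{1}{1+\gamma^{2}},
\end{equation*}
is decreasing in $a$ on the range where the water-filling is active, namely $2\lambda c\,a<1$.

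Once this reduction is made, the work is a one-line calculus check. A direct differentiation gives
\begin{equation*}
f'(a) \;=\; -\,\frac{(1-2\lambda c\,a)(1+2\lambda c\,a)}{4\lambda^{2}a^{2}},
\end{equation*}
and both factors in the numerator are strictly positive on the admissible range, so $f'(a)<0$. Because the diagonal of $\mathbf{\Lambda}_{G(\gamma)}$ is arranged in ascending order, $i<j$ implies $\Lambda_{G(\gamma)}(i,i)\le \Lambda_{G(\gamma)}(j,j)$, and the monotonicity of $f$ immediately yields the desired inequality $\Lambda_{G(\gamma)}(i,i)/\sigma_i^2 \ge \Lambda_{G(\gamma)}(j,j)/\sigma_j^2$.

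The only subtlety worth flagging is the handling of indices where the positive-part operator in (\ref{Eq Lemma3 Revised}) clips $\Sigma(i,i)$ to zero, which corresponds formally to $\sigma_i=\infty$. Since the diagonal of $\mathbf{\Sigma}$ is in descending order while that of $\mathbf{\Lambda}_{G(\gamma)}$ is in ascending order, clipping can occur only for the larger indices; hence for $i<j$ either both indices are in the active set (the case just treated), or $\Sigma(j,j)=0$, making the right-hand side zero and the inequality trivial. This case check is the only obstacle, and it is purely bookkeeping; the substantive content of the fact is the monotonicity of $f$, which follows at once from the derivative computation above.
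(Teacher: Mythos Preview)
Your proof is correct and follows essentially the same approach as the paper: substitute the water-filling expression (\ref{Eq Lemma3 Revised}) and verify that $\Lambda_{G(\gamma)}(i,i)/\sigma_i^2$ is monotone in $\Lambda_{G(\gamma)}(i,i)$. The only cosmetic difference is that the paper bounds the ratio $\sigma_i^2/\sigma_j^2$ via two elementary algebraic inequalities rather than computing a derivative, and it handles the rank-deficient case separately in the surrounding lemma; your treatment of the clipped indices is fine and slightly more self-contained.
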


\begin{proof}[Proof of Fact 2]
From (\ref{Eq Lemma3 Revised}), we have%
\begin{equation}
\sigma _{i}=\frac{1}{\left[ \left( \frac{1}{2\lambda \Lambda
_{G\left(
\gamma \right) }\left( i,i\right) }-\frac{1}{1+\gamma ^{2}}\right) ^{+}%
\right] }  \label{Eq 5}
\end{equation}%
where $\lambda >0$. Note that $\sigma _{i}>0$ for all $i=1,...,n_{R}$ (as $%
\mathbf{\Sigma }$ is of full rank). Then,
\begin{equation}
\frac{\sigma _{i}^{2}}{\sigma _{j}^{2}}=\frac{\left(
\frac{1}{2\lambda
\Lambda _{G\left( \gamma \right) }\left( j,j\right) }-\frac{1}{1+\gamma ^{2}}%
\right) ^{2}}{\left( \frac{1}{2\lambda \Lambda _{G\left( \gamma
\right) }\left( i,i\right) }-\frac{1}{1+\gamma ^{2}}\right)
^{2}}\overset{\left( 1\right) }{\leq }\frac{\left(
\frac{1}{2\lambda \Lambda _{G\left( \gamma \right) }\left(
j,j\right) }\right) ^{2}}{\left( \frac{1}{2\lambda \Lambda
_{G\left( \gamma \right) }\left( i,i\right) }\right)
^{2}}\overset{\left( 2\right) }{\leq }\frac{\frac{1}{2\lambda
\Lambda _{G\left( \gamma \right) }\left( j,j\right)
}}{\frac{1}{2\lambda \Lambda _{G\left( \gamma \right) }\left(
i,i\right) }}=\frac{\Lambda _{G\left( \gamma \right) }\left(
i,i\right) }{\Lambda _{G\left( \gamma \right) }\left( j,j\right)
}, \label{Eq 6}
\end{equation}%
where steps $\overset{\left( 1\right) }{\leq }$ and
$\overset{\left( 2\right) }{\leq }$ are both from
$\frac{1}{2\lambda \Lambda _{G\left( \gamma \right) }\left(
j,j\right) }\leq \frac{1}{2\lambda \Lambda _{G\left( \gamma
\right) }\left( i,i\right) }$ (as the diagonal elements of \textbf{$\Lambda $%
}$_{G\left( \gamma \right) }$ are arranged in the ascending
order). This yields (\ref{Eq 4}).
\end{proof}

From (\ref{Eq AppendII 6}), we see that, for any $\left[
d_{1},\cdots
,d_{n_{R}}\right] $ in the feasible region, $d_{1}\leq \sigma _{1}$ and $%
d_{n_{R}}\geq \sigma _{n_{R}}$. Denote the objective function in
(\ref{Eq AppendII 5}) as
\begin{equation}
f\left( d_{1},\cdots ,d_{n_{R}}\right) =\dsum\limits_{i=1}^{n_{R}}\frac{%
\Lambda _{G\left( \gamma \right) }\left( i,i\right) }{d_{i}}.
\end{equation}%
For any $d_{1}\leq \sigma _{1}$ and $d_{n_{R}}\geq \sigma _{n_{R}}$, let $%
\varepsilon $ be a non-negative number satisfying%
\begin{equation}
d_{1}+\varepsilon \leq \sigma _{1}\text{ and
}d_{n_{R}}-\varepsilon \geq \sigma _{n_{R}}.  \label{xx}
\end{equation}

\begin{fact}
\begin{equation}
f\left( d_{1}+\varepsilon ,\cdots ,d_{n_{R}}-\varepsilon \right)
\leq f\left( d_{1},\cdots ,d_{n_{R}}\right)
\end{equation}%
where equality holds when $\varepsilon =0$.
\end{fact}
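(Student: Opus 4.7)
The plan is a direct computation exploiting Fact \ref{MarkerProp1}. Since only the first and last coordinates are perturbed, the telescoping in the middle gives
\begin{equation*}
\Delta \triangleq f(d_{1}+\varepsilon,\ldots,d_{n_{R}}-\varepsilon) - f(d_{1},\ldots,d_{n_{R}}) = \varepsilon\left(\frac{\Lambda_{G(\gamma)}(n_{R},n_{R})}{d_{n_{R}}(d_{n_{R}}-\varepsilon)} - \frac{\Lambda_{G(\gamma)}(1,1)}{d_{1}(d_{1}+\varepsilon)}\right).
\end{equation*}
The case $\varepsilon=0$ yields $\Delta=0$, which handles the equality clause, so henceforth I take $\varepsilon>0$ and it suffices to show that the bracketed quantity is non-positive.

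For the bound, I would first read off from the majorization constraint (\ref{Eq AppendII 6}) that $d_{1}\leq \sigma_{1}$ (the $t=1$ partial-sum inequality) and $d_{n_{R}}\geq \sigma_{n_{R}}$ (the $t=n_{R}-1$ inequality together with the total-sum equality). Combining these with the hypothesis (\ref{xx}) gives the product bounds $d_{1}(d_{1}+\varepsilon)\leq \sigma_{1}^{2}$ and $d_{n_{R}}(d_{n_{R}}-\varepsilon)\geq \sigma_{n_{R}}^{2}$, and inverting yields
\begin{equation*}
\frac{\Lambda_{G(\gamma)}(n_{R},n_{R})}{d_{n_{R}}(d_{n_{R}}-\varepsilon)} \;\leq\; \frac{\Lambda_{G(\gamma)}(n_{R},n_{R})}{\sigma_{n_{R}}^{2}}, \qquad \frac{\Lambda_{G(\gamma)}(1,1)}{d_{1}(d_{1}+\varepsilon)} \;\geq\; \frac{\Lambda_{G(\gamma)}(1,1)}{\sigma_{1}^{2}}.
\end{equation*}

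The chain is then closed by the single link
\begin{equation*}
\frac{\Lambda_{G(\gamma)}(n_{R},n_{R})}{\sigma_{n_{R}}^{2}} \;\leq\; \frac{\Lambda_{G(\gamma)}(1,1)}{\sigma_{1}^{2}},
\end{equation*}
which is exactly Fact \ref{MarkerProp1} applied with $i=1$, $j=n_{R}$. Concatenating the three inequalities shows the bracketed quantity is $\leq 0$, hence $\Delta\leq 0$, completing the plan. I do not anticipate a real obstacle: Fact \ref{MarkerProp1} is already in hand and the rest is bookkeeping. The only subtlety is that it is the strengthened form of the hypothesis (\ref{xx})—and not merely the bare majorization inequalities $d_{1}\leq\sigma_{1}$, $d_{n_{R}}\geq\sigma_{n_{R}}$—that is needed to make the product bounds tight enough to absorb the $\pm\varepsilon$ perturbations for every admissible $\varepsilon$.
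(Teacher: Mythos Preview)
Your proposal is correct and follows essentially the same route as the paper: compute the two-term difference, bound the resulting fractions in terms of $\sigma_{1}^{2}$ and $\sigma_{n_{R}}^{2}$ using the constraints $d_{1}\leq\sigma_{1}$, $d_{1}+\varepsilon\leq\sigma_{1}$, $d_{n_{R}}\geq\sigma_{n_{R}}$, $d_{n_{R}}-\varepsilon\geq\sigma_{n_{R}}$, and close with Fact~\ref{MarkerProp1}. The only cosmetic difference is that the paper splits the bounding into two steps (first $d_{1}(d_{1}+\varepsilon)\to(d_{1}+\varepsilon)^{2}$, then $(d_{1}+\varepsilon)^{2}\to\sigma_{1}^{2}$, and similarly for the other term), whereas you go directly $d_{1}(d_{1}+\varepsilon)\leq\sigma_{1}^{2}$ in one step; both are valid and land on the same final inequality.
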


\begin{proof}[Proof of Fact 3]
We have%
\begin{align*}
& f\left( d_{1}+\varepsilon ,\cdots ,d_{n_{R}}-\varepsilon \right)
-f\left(
d_{1},\cdots ,d_{n_{R}}\right)  \\
& =\frac{\Lambda _{G\left( \gamma \right) }\left( 1,1\right) }{%
d_{1}+\varepsilon }+\frac{\Lambda _{G\left( \gamma \right) }\left(
n_{R},n_{R}\right) }{d_{n_{R}}-\varepsilon }-\left( \frac{\Lambda
_{G\left( \gamma \right) }\left( 1,1\right) }{d_{1}}+\frac{\Lambda
_{G\left( \gamma
\right) }\left( n_{R},n_{R}\right) }{d_{n_{R}}}\right)  \\
& =-\Lambda _{G\left( \gamma \right) }\left( 1,1\right) \frac{\varepsilon }{%
\left( d_{1}+\varepsilon \right) d_{1}}+\Lambda _{G\left( \gamma
\right) }\left( n_{R},n_{R}\right) \frac{\varepsilon }{\left(
d_{n_{R}}-\varepsilon
\right) d_{n_{R}}} \\
& \overset{(a)}{\leq }-\Lambda _{G\left( \gamma \right) }\left(
1,1\right) \frac{\varepsilon }{\left( d_{1}+\varepsilon \right)
^{2}}+\Lambda _{G\left( \gamma \right) }\left( n_{R},n_{R}\right)
\frac{\varepsilon }{\left(
d_{n_{R}}-\varepsilon \right) ^{2}} \\
& =\left( -\frac{\Lambda _{G\left( \gamma \right) }\left( 1,1\right) }{%
\left( d_{1}+\varepsilon \right) ^{2}}+\frac{\Lambda _{G\left(
\gamma \right) }\left( n_{R},n_{R}\right) }{\left(
d_{n_{R}}-\varepsilon \right)
^{2}}\right) \varepsilon  \\
& \overset{(b)}{\leq }\left( -\frac{\Lambda _{G\left( \gamma
\right) }\left( 1,1\right) }{\sigma _{1}^{2}}+\frac{\Lambda
_{G\left( \gamma \right) }\left(
n_{R},n_{R}\right) }{\sigma _{n_{R}}^{2}}\right) \varepsilon  \\
& \overset{(c)}{\leq }0
\end{align*}%
where step ($a$) is self-evident, step ($b$) follows from
(\ref{xx}) and
step ($c$) follows from (\ref{Eq 4}) in Fact 1. The equalities in steps ($a$%
)-($c$) hold when $\varepsilon =0$, which completes the proof.
\end{proof}

Fact 3 implies that the objective function $f\left(
d_{1}+\varepsilon ,\cdots ,d_{n_{R}}-\varepsilon \right) ,$ with
$\varepsilon $ constrained by
(\ref{xx}), is minimized when $\varepsilon =\sigma _{1}-d_{1}$ or $%
\varepsilon =\sigma _{n_{R}}-d_{n_{R}}$. Therefore, the optimum of
the problem in (\ref{Eq AppendII 5}) is achieved at either
$d_{1}=\sigma _{1}$
or $d_{n_{R}}=\sigma _{n_{R}}$. Without loss of generality, we assume that $%
d_{1}=\sigma _{1}$. Then, the dimension of the problem in (\ref{Eq
AppendII
5}) reduces from $n_{R}$ to $n_{R}-1$. Applying the same reasoning to this ($%
n_{R}-1$)-dimension problem, we can further show that
$d_{2}=\sigma _{2}$. Continuing this process, we eventually have
(\ref{Eq 7}), or equivalently,
\begin{equation*}
\mathbf{D}=\mathbf{\Sigma }.
\end{equation*}%
Therefore, from (\ref{Eq AppendII 2}) and the uniqueness of SVD, we obtain $%
\mathbf{V}=\mathbf{I}$ and $\widetilde{\mathbf{K}}_{opt}^{\left(
\gamma \right) }\mathbf{=U}_{G\left( \gamma \right)
}=\mathbf{K}_{opt}^{(\gamma )}.$

Next, consider that $\mathbf{\Sigma }$ does not have full rank.
Define
\begin{equation*}
\overline{\mathbf{\Sigma }}=\mathbf{\Sigma }+\delta
\sqrt{\mathbf{\Delta }}
\end{equation*}%
where $\delta $ is an arbitrary positive number, and
$\mathbf{\Delta }$ is a diagonal matrix with non-negative diagonal
elements. We can properly choose such a $\mathbf{\Delta }$ that:
(a) $\overline{\mathbf{\Sigma }}$ is of full
rank; (b) For a sufficiently small $\delta $, Fact 2 always holds for $%
\overline{\mathbf{\Sigma }}$ (and so does Fact 3). To this end, we choose%
\begin{equation}
\Delta \left( i,i\right) =0\text{, if }\Sigma \left( i,i\right)
\neq 0 \label{Eq Delta Case1}
\end{equation}%
and

\begin{equation}
\frac{\Lambda _{G}\left( i,i\right) }{\Delta ^{-2}\left(
i,i\right) }\geq \frac{\Lambda _{G}\left( j,j\right) }{\Delta
^{-2}\left( j,j\right) }\text{, if }\Sigma \left( i,i\right)
=\Sigma \left( j,j\right) =0,i<j\text{.} \label{Eq Delta Case2}
\end{equation}%
We verify Fact 2 for the above choice of $\mathbf{\Delta }$.
Noting that the diagonal entries of $\mathbf{\Sigma }$ are
arranged in descending order, we only need
to consider three cases: (i) $\Sigma \left( i,i\right) $ \TEXTsymbol{>} 0, $%
\Sigma \left( j,j\right) $ \TEXTsymbol{>} 0, $i<j$; (ii) $\Sigma
\left( i,i\right) $ \TEXTsymbol{>} 0, $\Sigma \left( j,j\right) $
= 0, $i<j$; (iii) $\Sigma \left( i,i\right) $ = $\Sigma \left(
j,j\right) $ = 0, $i<j$. From our previous proof, Fact 2 holds for
case (i). For case (ii), Fact 2 can be guaranteed by letting
$\delta $ be sufficiently small. For case (iii), Fact 2 is
guaranteed from (\ref{Eq Delta Case2}). Thus, Fact 2 is guaranteed
for a sufficiently small $\delta $ and the chosen $\mathbf{\Delta
}$.

Now, consider the following optimization problem:%
\begin{subequations}
\label{xxx}
\begin{equation}
\min_{\mathbf{K}}\text{Tr}\left( \mathbf{G}\left( \gamma \right) \mathbf{K}%
\overline{\mathbf{\Sigma }}\mathbf{K}^{T}\right)   \label{xxxa}
\end{equation}%
s.t.%
\begin{equation}
\left[ \mathbf{K}^{-1}\left( \mathbf{K}^{-1}\right) ^{T}\right] _{\text{diag}%
}=\mathbf{I}\text{.}  \label{xxxb}
\end{equation}%
\end{subequations}
Noting that $\overline{\mathbf{\Sigma }}$ is of full rank and
Facts 2 and 3 hold for $\overline{\mathbf{\Sigma }}$, we see that
the optimal solution to
the above problem is $\mathbf{K}=\mathbf{K}_{opt}^{(\gamma )}$. Now, let $%
\delta \rightarrow 0$. From the continuity of the problem in
(\ref{xxx}), the optimal $\mathbf{K}$ is still given by
$\mathbf{K}_{opt}^{(\gamma )}$. This completes the proof of Lemma
\ref{MarkerLemmaApproxII}.

\newpage
\begin{figure}[tbp]
\centering\includegraphics[width=5in]{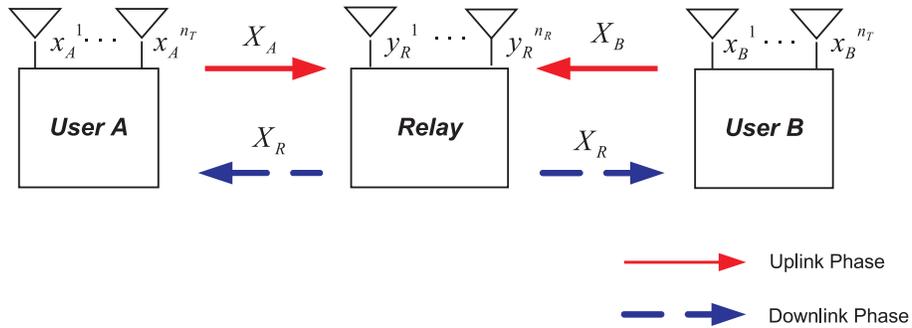}
\caption{Configuration of a MIMO TWRC.}
\label{Fig_Configuration_MIMOTWRC}
\end{figure}

\begin{figure}[tbp]
\centering\includegraphics[width=4.5in]{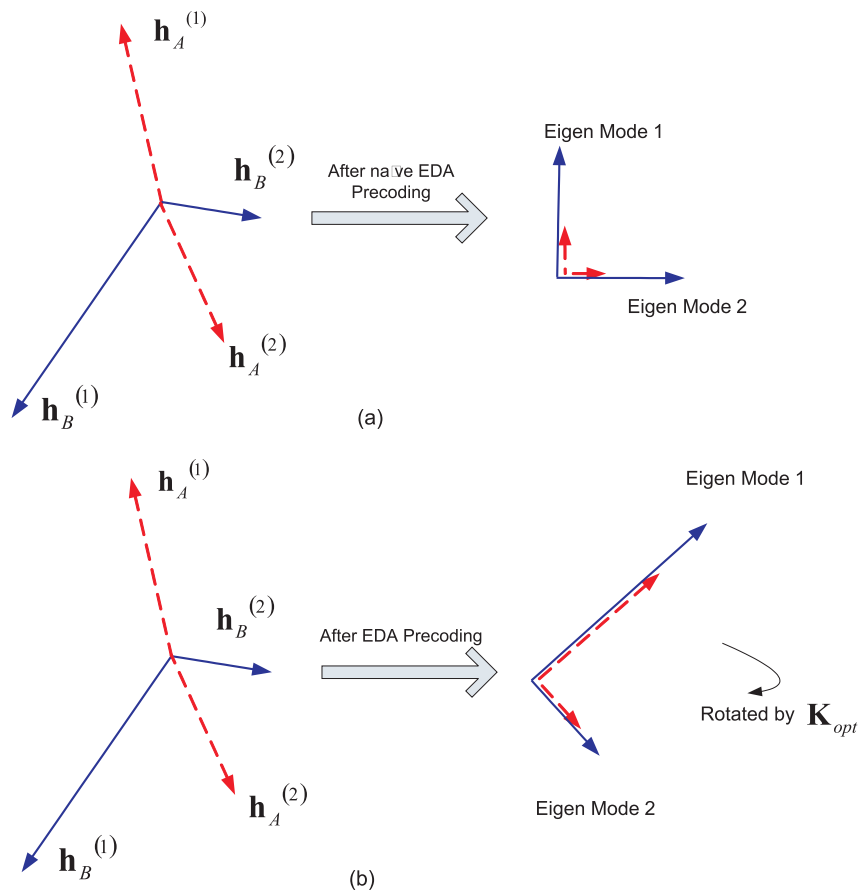}
\caption{Geometrical illustration of (a) Naive EDA precoding and (b) EDA
precoding (with $\mathbf{KK}^{T}=\mathbf{I}$), for a two dimension case.
Here, $\mathbf{H}_{m,R}=[\mathbf{h}_{m}^{\left( 1\right) },\mathbf{h}%
_{m}^{\left( 2\right) }],m\in \left\{ A,B\right\} $. The dashed arrow and
solid arrow denote users $A$ and $B$, respectively. For $A$, since the
correlation of its channel vectors is large, a significant power loss is
suffered in the naive EDA precoding. The proposed EDA precoding can
effectively avoid this loss by introducing a rotation. }
\label{Fig_ZF_ECA}
\end{figure}

%
%

\begin{figure}[tbp]
\centering\includegraphics[width=6in]{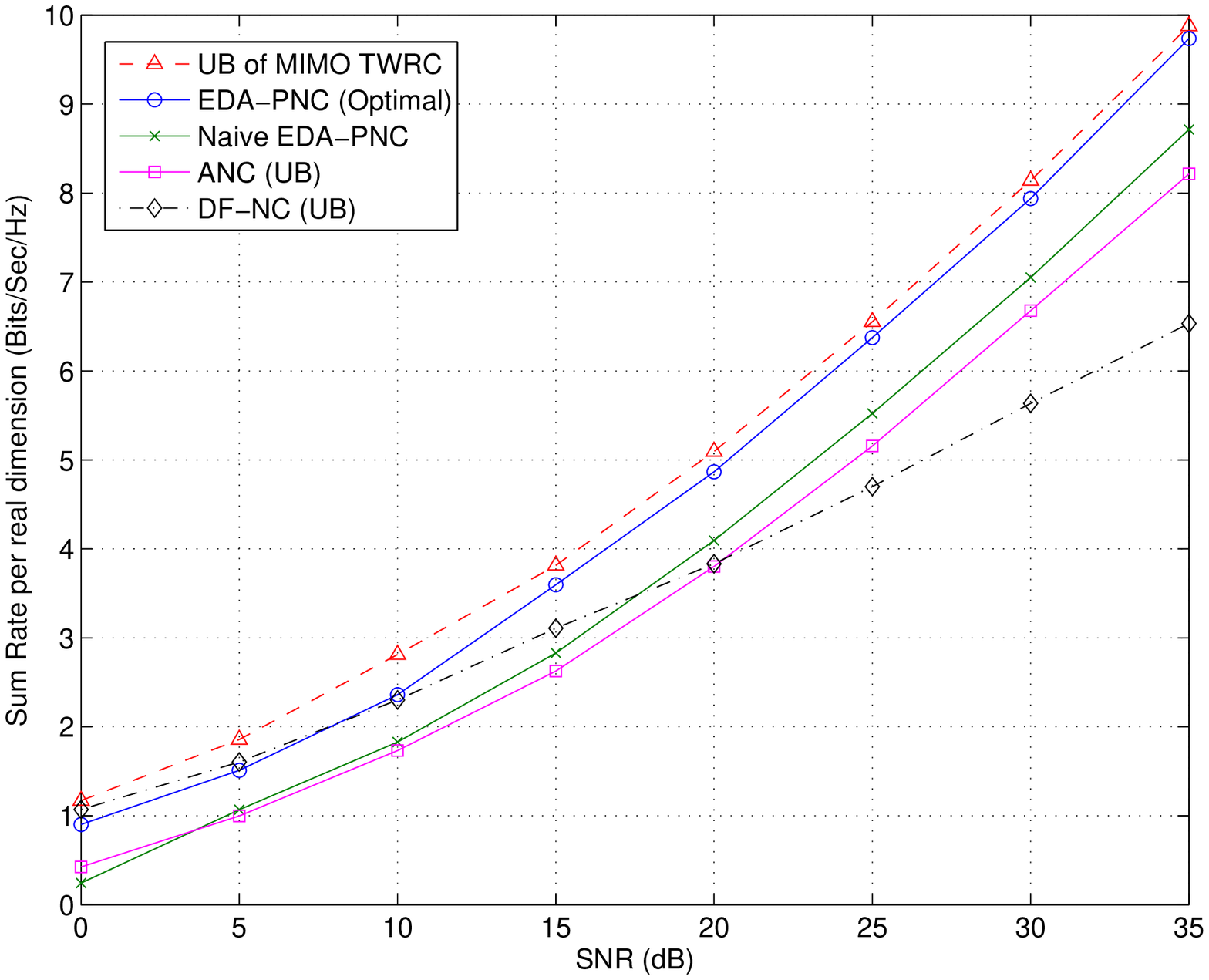}
\caption{Achievable sum-rate of the proposed EDA-PNC scheme for a MIMO TWRC
with $n_{T}=n_{R}=2.$}
\label{Fig_SumRate_2by2}
\end{figure}

\begin{figure}[tbp]
\centering\includegraphics[width=6in]{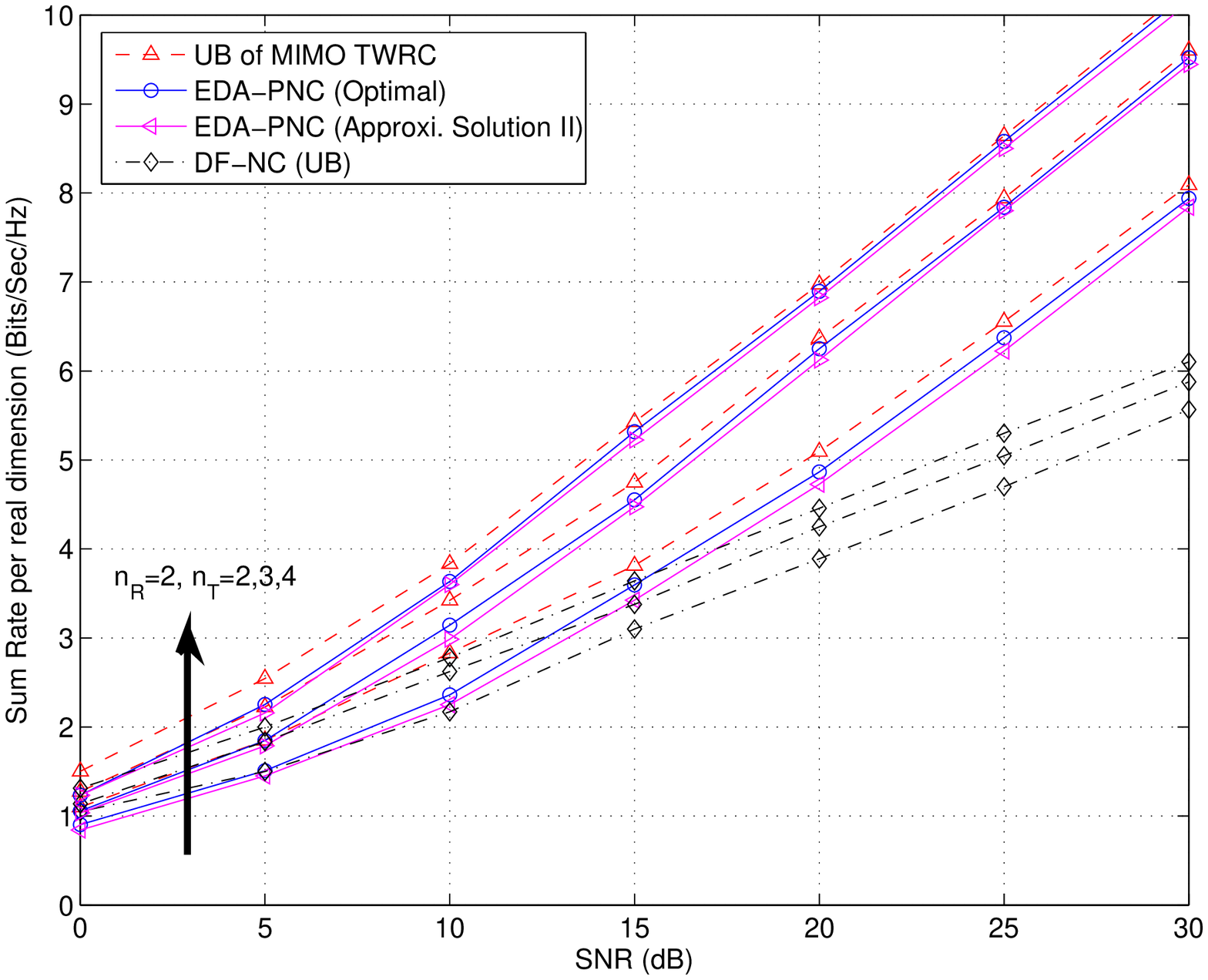}
\caption{Achievable sum-rate of the proposed EDA-PNC scheme for MIMO TWRCs
with $n_{R}=2$, $n_{T}=2,3,4$.}
\label{Fig_SumRate_nR2_Various_nT}
\end{figure}

\begin{figure}[tbp]
\centering\includegraphics[width=6in]{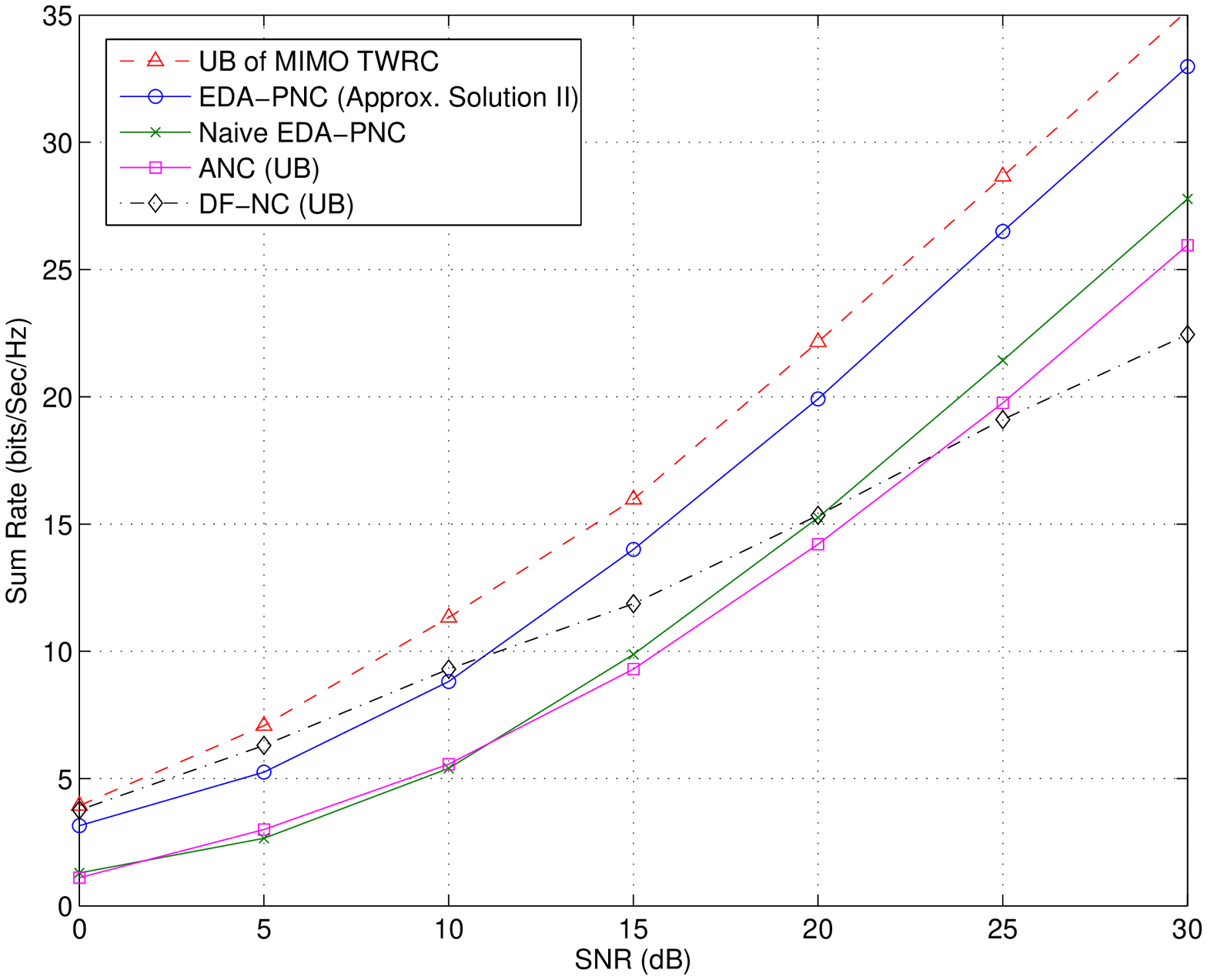}
\caption{Achievable sum-rate of the proposed EDA-PNC scheme for a MIMO TWRC
with $n_{T}=n_{R}=4.$}
\label{Fig_AchievableRate_MIMO_Nr4_VariousSchemes}
\end{figure}
\begin{figure}[tbp]
\centering\includegraphics[width=6in]{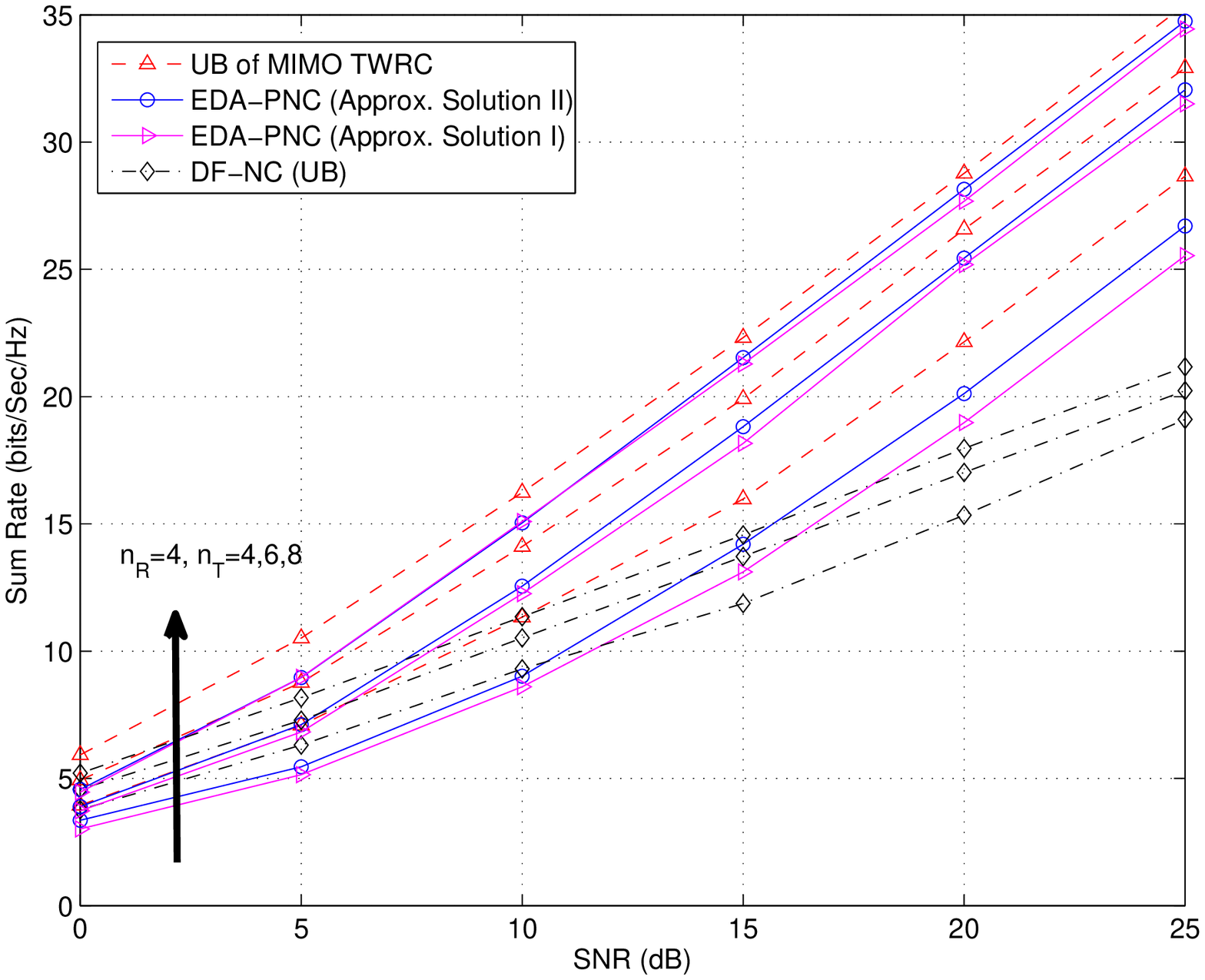}
\caption{Achievable sum-rate of the proposed EDA-PNC scheme for MIMO TWRCs
with $n_{R}=4$, $n_{T}=4,6,8$.}
\label{Fig_AchievableRate_MIMO_Nr4}
\end{figure}

\begin{figure}[tbp]
\centering\includegraphics[width=6in]{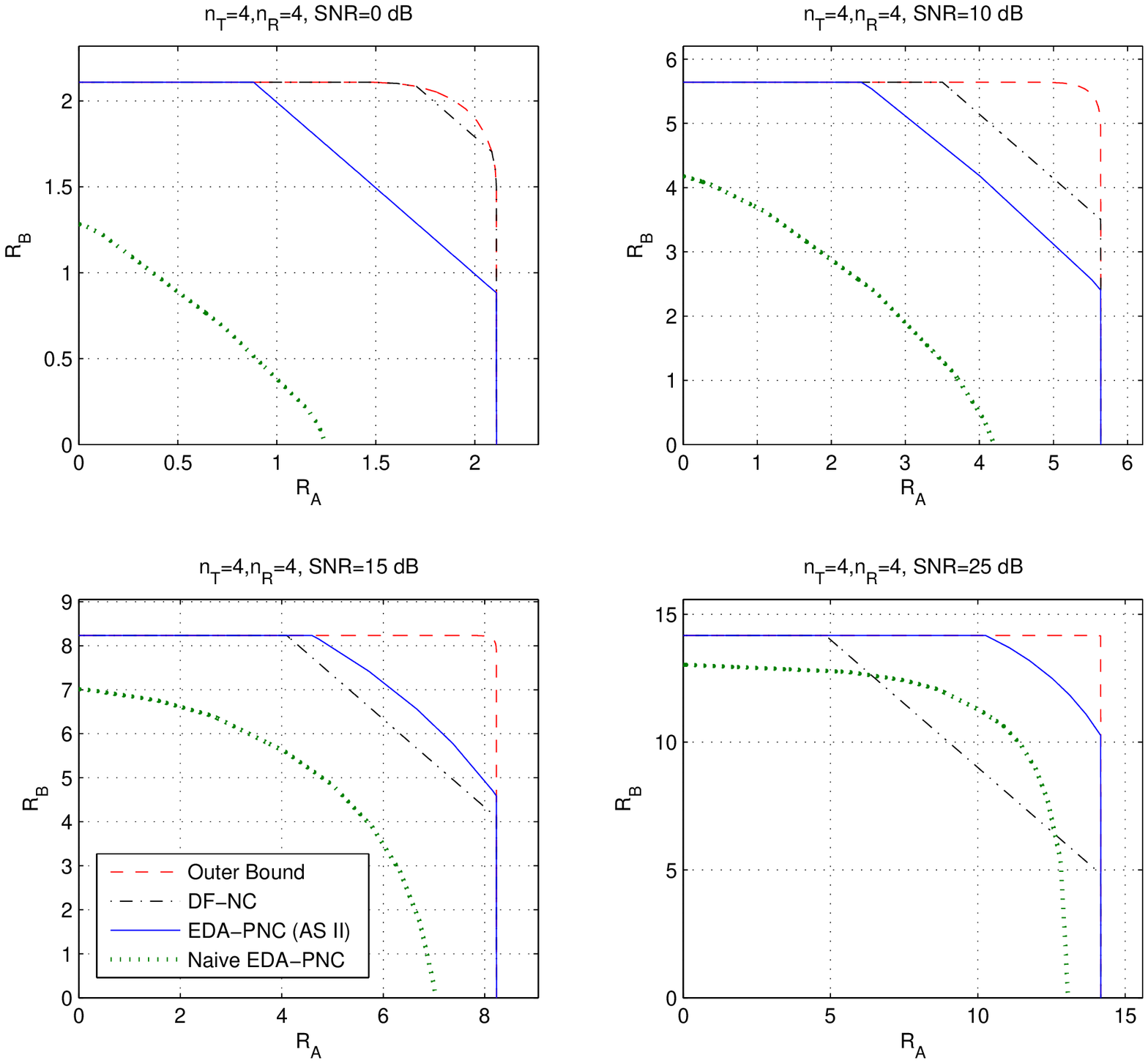}
\caption{Achievable rate-region of the proposed EDA-PNC scheme for a MIMO
TWRC with $n_{T}=n_{R}=4,$ where $SNR=0,10,15,25$ dB.}
\label{Fig_RateRegion_nT4nR4}
\end{figure}

\begin{figure}[tbp]
\centering\includegraphics[width=6in]{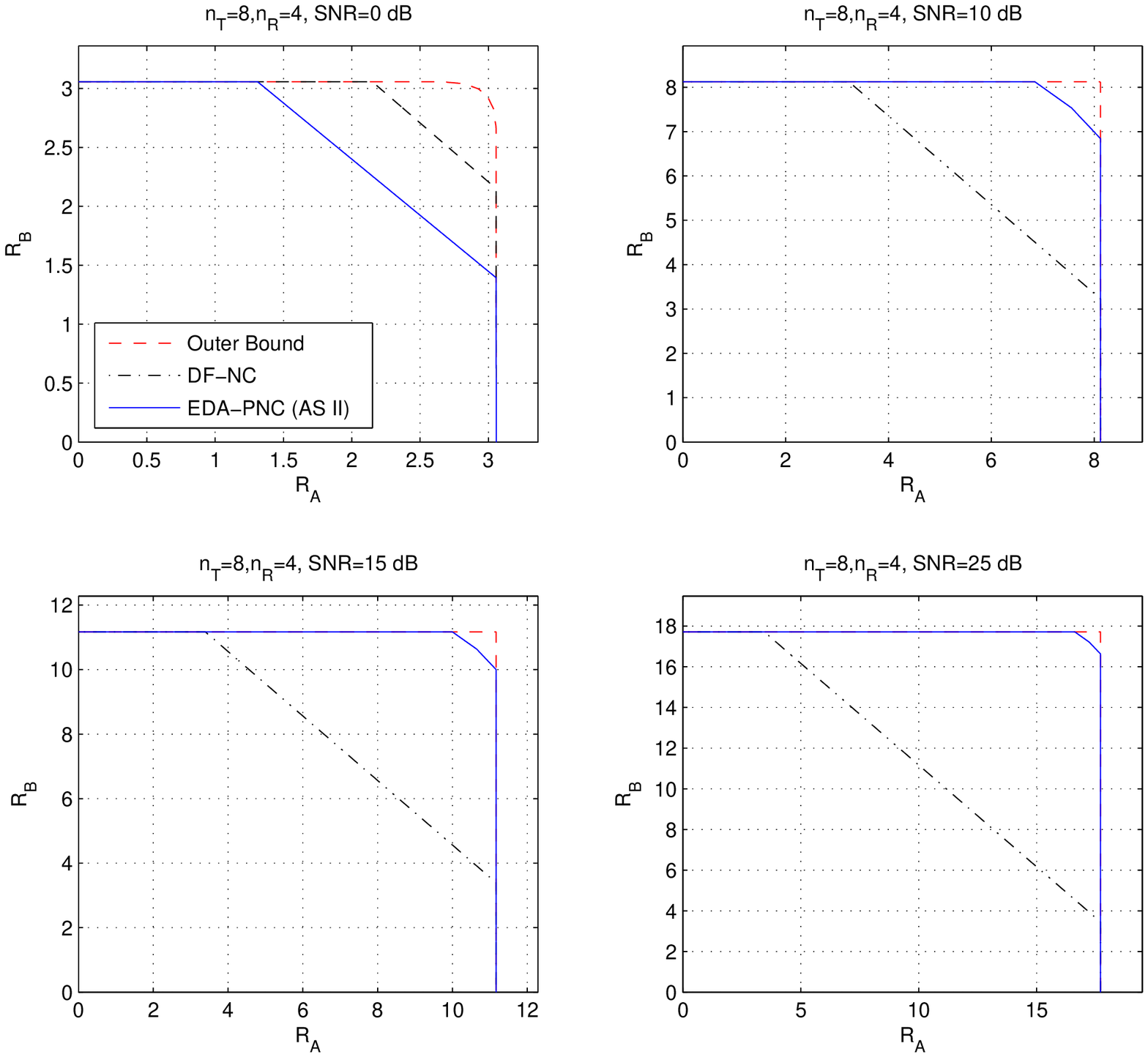}
\caption{Achievable rate-region of the proposed EDA-PNC scheme for a MIMO
TWRC with $n_{T}=8$, $n_{R}=4,$ where $SNR=0,10,15,25$ dB.}
\label{Fig_RateRegion_nT8nR4}
\end{figure}

\end{document}